\theoremstyle{theorem}
\newtheorem{theorem}{Theorem}[section]
\newtheorem{lemma}[theorem]{Lemma}
\newtheorem{corollary}[theorem]{Corollary}
\newtheorem{proposition}[theorem]{Proposition}
\theoremstyle{definition}
\newtheorem{definition}[theorem]{Definition}
\newtheorem*{problem}{Problem formulation}
\newenvironment{example}
{\pushQED{\qed}\examplex}
{\popQED\endexamplex}
\newenvironment{remark}
{\pushQED{\qed}\remarkx}
{\popQED\endremarkx}
\DeclareMathOperator{\NN}{\mathbb{N}}
\DeclareMathOperator{\CC}{\mathbb{C}}
\DeclareMathOperator{\RR}{\mathbb{R}}
\DeclareMathOperator{\QQ}{\mathbb{Q}}
\DeclareMathOperator{\KK}{\mathbb{K}}
\DeclareMathOperator{\LL}{\mathbb{L}}
\newcommand{\Va}{\mathbb{V}}
\newcommand{\Cu}{\mathcal{C}}
\newcommand{\LD}{\mathcal{L}}
\DeclareMathOperator{\denom}{denom}
\DeclareMathOperator{\num}{num}
\definecolor{dkgreen}{rgb}{0,0.6,0}
\definecolor{gray}{rgb}{0.5,0.5,0.5}
\definecolor{mauve}{rgb}{0.58,0,0.82}
\title{On real and observable rational realizations of input-output equations}
\author[1]{Sebastian Falkensteiner\thanks{sebastian.falkensteiner@mis.mpg.de}}
\affil[1]{Max Planck Institute for Mathematics in the Sciences, Inselstr. 22, 04103 Leipzig, Germany.}
\author[1,2]{Dmitrii Pavlov\thanks{dmitrii.pavlov@mis.mpg.de  (corresponding author)}}
\affil[2]{Technische Universität Dresden, Zellescher Weg 12--14, 01069 Dresden, Germany.}
\author[3]{J.Rafael Sendra\thanks{jrafael.sendra@cunef.edu}}
\affil[3]{CUNEF University, Department of Mathematics, Spain.}
\date{}
\begin{document}

 %\vspace*{1cm}
\maketitle
{\color{red} The journal version of this paper appears in
\textit{[S. Falkensteiner, Dmitrii Pavlov, Rafael Sendra. \\
On real and observable rational realizations of input-output
equations \ 
Systems \& Control Letters 198 (2025) 106059 
\url{https://doi.org/10.1016/j.sysconle.2025.106059}]} under the CC BY license ( \url{http://creativecommons.org/licenses/by/4.0/} ).}
\begin{abstract}
Given a single (differential-algebraic) input-output equation, we present a method for finding different representations of the associated system in the form of rational realizations; these are dynamical systems with rational right-hand sides. 
It has been shown that in the case where the input-output equation is of order one, rational realizations can be computed, if they exist. 
In this work, we focus first on the existence and actual computation of the so-called observable rational realizations, and secondly on rational realizations with real coefficients. 
The study of observable realizations allows to find every rational realization of a given first order input-output equation, and the necessary field extensions in this process. 
We show that for first order input-output equations the existence of a rational realization is equivalent to the existence of an observable rational realization. 
Moreover, we give a criterion to decide the existence of real rational realizations. 
The computation of observable and real realizations of first order input-output equations is fully algorithmic. We also present partial results for the case of higher order input-output equations.
\end{abstract}

\noindent \textbf{keywords}
algebraic differential equations, rational dynamical systems, real realizations, observability, algebraic curves, proper parametrization

\section{Introduction}

Many processes in natural sciences are conveniently described by dynamical systems in the \emph{state space form}, that is, ODE systems of the form 
 \begin{equation}\label{eq:ODEmodel}
   \Sigma = \begin{cases}
    \mathbf{x}' = \mathbf{f}(\mathbf{u}, \mathbf{x}),\\
    \mathbf{y} = \mathbf{g}(\mathbf{u}, \mathbf{x}),
    \end{cases}
   \end{equation}
where $\mathbf{x} = (x_1, \ldots, x_n)$ are the unknowns describing the state of the system (\emph{state variables}), $\mathbf{u} = (u_1, \ldots, u_m)$ are the unknowns representing external forces (\emph{input variables}), $\mathbf{y} = (y_1,\ldots,y_l)$ are \emph{output variables}, $\mathbf{f} = (f_1, \ldots, f_n)$ are functions describing how the rate of change of the state depends on the state and external inputs and $\mathbf{g} = (g_1,\ldots,g_l)$ are functions describing how the output of the system depends on its state and inputs.

In an experimental setup it is typically only possible to observe the values of the input and output variables, but not of the state variables.
Therefore, by using numerical techniques, one might be able to find differential equations that connect $\mathbf{y}$ and $\mathbf{u}$ but not the variables $\mathbf{x}$. Such equations are called \emph{input-output equations} (IO-equations).
We note that in the case of rational single-output systems, i.e. when $l=1$ and $\mathbf{f}$ and $\mathbf{g}$ are tuples of rational functions, it is possible to find a single algebraic IO-equation describing all the relations between inputs and outputs of the system~\cite{dong2021differential}.

The question of reconstructing a dynamical system in the state space form from given input-output data (in the form on an input-output map, a set of input-output trajectories, or an input-output equation) is known as the \emph{realization problem} and is widely studied in control theory~\cite{sontag1993rational, schaft1986classic, sussmann1976classic, kotta2018minimal, jakubczyk1980minimal, zhang2010multi}. 
The structure of the problem depends significantly on which class of functions~$\mathbf{f}$ and~$\mathbf{g}$ are sought in. Typical classes considered include polynomial, rational, algebraic and analytic functions~\cite{sontag1993rational, schaft1986classic, sussmann1976classic}.
In this paper we concentrate on the version of the realization problem in which the starting point is the input-output equation. This is different from the articles cited above, where one starts either with an input-output map or a set of input-output trajectories. Our choice of formulation of the realization problem will allow us to use algorithmic tools to tackle it. Moreover, we concentrate on the case of single-output systems with~$\mathbf{f}$ and~$\mathbf{g}$ rational, since it is of substantial algebraic interest. 
The problem of recovering a system of the form~\eqref{eq:ODEmodel} with~$\mathbf{f}$ and $\mathbf{g}$ rational is known as the \emph{rational realization problem} \cite{bartosiewicz1985new, bartosiewicz1986realizations, moog1990prime, nvemcova2009realization, nvemcova2010realization, nvemcova2010structural}.

The realization problem was originally studied using mainly analytic methods~\cite{sussmann1976classic, jakubczyk1980minimal, schaft1986classic}.
The first attempt to approach it from the point of view of applied algebraic geometry was made by Forsman in~\cite{forsman1993rational}. 
He showed that for a no-input-single-output system over~$\mathbb{C}$ rational realizability is equivalent to unirationality of the hypersurface defined by a given IO-equation.
This approach was extended to the case of single-input-single-output systems over algebraically closed fields of zero characteristic in~\cite{pavlov2022realizing}.
However, from an applied point of view it is more interesting to consider the rational realization problem over the field of real numbers $\mathbb{R}$. 
In this paper we extend the setup of \cite{pavlov2022realizing} and study the rational realization problem over~$\mathbb{R}$. 

Just like in~\cite{forsman1993rational} and~\cite{pavlov2022realizing}, we show that rational realizations can be obtained from special parametrizations of the hypersurface defined by the IO-equation.
To decide whether real realizations exist and produce them algorithmically, it is crucial that the parametrization we start from is \emph{proper}, that is, induces a birational map from the affine space to the hypersurface. 
In control theory, this property is also called \emph{(global) observability}. 
This motivates studying the problem of finding \emph{observable rational realizations}, that is, deciding whether a realization can be obtained from a proper parametrization of the corresponding hypersurface. 
We also address this problem for the case of single-input-single-output systems. We note, however, that in many cases one can find real rational realizations of IO-equations without relying on proper parametrizations. It is exactly the desire to develop \textit{necessary and sufficient} conditions of rational realizability over $\mathbb{R}$ that motivates our study of global observability.

%Additionally, the algorithms for parametrizing hypersurfaces may introduce complex numbers, even if the initial hypersurface is defined by a polynomial with real coefficients. This normally leads to realizations involving complex numbers which may turns to be useless. This leads to the question of deciding the existence, and computing if they exist, real realizations. We also deal with this question in this paper. 

%\textcolor{red}{Although most algorithms and methods do not require parametrizations and realizations to be proper or observable, respectively, in order to apply the methods presented here or in related works, it is often necessary to impose properness / observability for showing non-existence, guaranteeing that an algorithm terminates, etc. In the present paper this is required for instance in the decision algorithm of the existence of real realizations of first-order IO-equations.}

We now comment on the novelty of the paper. The problem of finding observable rational realizations has been studied in \cite{nvemcova2009realization, nvemcova2010realization}. An analog of a theorem by Sussmann \cite{sussmann1976classic} for rational systems \cite{nvemcova2010realization} states that if a rational realization exists, it can always be made globally observable if one allows it to be defined on a \emph{subvariety} of the affine space. In \cite{pavlov2022realizing} the authors study realizations that are necessarily defined on the \emph{whole} affine space but are a priori only \emph{locally} observable. In this paper, extending the approach of \cite{pavlov2022realizing}, we show (Theorems \ref{thm:properrealization2} and \ref{thm:firstorderproper}) that for IO-equations of low order, if a rational realization (defined on the whole affine space) exists, then a \emph{globally} observable rational realization defined on the \emph{whole} affine space also exists. Another novelty is the algorithmic approach to finding real rational realizations. While the problem of existence of real rational realizations was studied in \cite{nvemcova2009realization}, the authors of that paper underscore that the algorithms for producing such realizations with desirable properties still need to be developed. In this paper, we fill this gap for lower order IO-equations with Algorithms \ref{alg-properrealization} and \ref{alg-realrealization}. With this in mind, we note that studying other important control-theoretic properties of realizations, such as controllability or minimality, is outside the scope of this paper.

Many models studied in control theory involve parameters. One frequently studied property is that of \emph{identifiability} of the parameters~\cite{hong2020global}. 
This property is strongly related to observability, since a parameter $c$ can always be viewed as additional state variable when adding the additional equation $c'=0$. Then the parameter $c$ is globally identifiable if and only if it is, viewed as a state variable, globally observable. 
In~\cite{barreiro2023origins} one can find a summary and collection of examples on locally but not globally identifiable parameters. We use the relation between identifiability and observability to study the latter by using algebraic methods, see Section \ref{sec-proper}.

This paper is organized as follows.
In Section \ref{sec-prelim} we introduce our algebraic framework, give necessary definitions, recall several general results on rational realizability over subfields of~$\mathbb{C}$ from~\cite{pavlov2022realizing} and prove some new ones. 
In Section \ref{sec-proper} we study the problem of (globally) observable rational realizability. 
In particular, we show that for first order IO-equations realizability is equivalent to observable realizability and present an algorithm for finding observable realizations. 
For equations of order zero w.r.t. $u$ we show that observable realizability can be ensured by the properness of an intermediate realization in~\cite[Algorithm 1]{pavlov2022realizing}.
Finally, Section 4 is devoted to the problem of real realizability. We present a criterion for real realizability of an IO-equation with real coefficients and, for first-order IO-equations, an algorithm for finding real realizations. 

\section{Preliminaries} \label{sec-prelim}

\subsection{Basics on Algebraic Geometry and Differential Algebra}
We start by introducing some basic concepts of algebraic geometry and differential algebra that we will need throughout the paper. Some useful introductory references here are \cite{cox1997ideals} and \cite{pogudin2024differential}.

\begin{definition}
A \emph{polynomial} over a field $\mathbb{K}$ in the variables $x_1,\ldots,x_n$ is a finite linear combination of finite products of variables with coefficients in $\mathbb{K}$. The set of all such polynomials forms a ring which is denoted by $\mathbb{K}[x_1,\ldots,x_n]$. Note that polynomials in a countably infinite set of variables are defined in exactly the same way. A polynomial is called \emph{irreducible} if it cannot be written as a product of two nonconstant polynomials. A \emph{rational function} over $\mathbb{K}$ in $x_1,\ldots,x_n$ is an equivalence class of ratios of two polynomials under the standard equivalence $p/q\equiv r/s\iff ps-qr=0$, where $q$ and $s$ are assumed not to be zero polynomials. The set of all such rational functions forms a field which is denoted by $\mathbb{K}(x_1,\ldots,x_n)$. This is the field of fractions of $\mathbb{K}[x_1,\ldots,x_n]$. We say that a rational function is \emph{in reduced form} if it is represented by a ratio of coprime polynomials. Note that $\mathbb{K}(x_1)(x_2)=\mathbb{K}(x_1,x_2)$.
\end{definition}
 
\begin{definition}
    A \emph{differential ring} is a pair $(R, \delta)$, where $R$ is a commutative ring with identity and $\delta: R\to R$ is a \emph{derivation}, i.e. a linear map satisfying the Leibniz rule. That is, $\delta(a+b) = \delta(a) + \delta(b)$ and $\delta(ab) = a\delta(b) + b\delta(a)$ for all $a,b\in R$. For the sake of brevity we adopt the notation $a' := \delta(a)$.
\end{definition}

\begin{definition}
    Let $(R,\delta)$ be a differential ring. We write $x^{(\infty)} = (x, x', x'',x^{(3)},\ldots)$ for a countable set of formal variables. We equip the polynomial ring $R[x^{(\infty)}]$ with a derivation obtained by extending $\delta$ to the variables as $\delta(x^{(i)}) = x^{(i+1)}$ and then to the whole polynomial ring by linearity and the Leibniz rule. The resulting differential ring is called the  \emph{ring of differential polynomials} in the (differential) variable $x$. When writing $R[x^{(\infty)}]$ we will always assume that it is equipped with a derivation described above. If the derivation on $R$ is not specified, we will assume it is identically zero. In this case $R$ is a \emph{ring of constants}. We call the highest order of occurring derivative of $x$ in a polynomial $p\in R[x^{(\infty)}]$ the \textit{order} of $p$ w.r.t. $x$.
\end{definition}
\begin{definition}
    If $R=\mathbb{K}$ is a field, then the field of fractions of $\mathbb{K}[x^{(\infty)}]$ is denoted $\mathbb{K}(x^{(\infty)})$ and is called the \emph{field of differential rational functions} in $x$. We note that differential polynomials and rational functions in multiple differential variables are naturally defined by adjoining one differential variable at a time to the base ring or field.
\end{definition}

\begin{definition}
   Let $(R, \delta)$ be a differential ring. An \emph{ideal} in $R$ is a subset $I\subseteq R$ such that for any $a,b \in I$ and $c\in R$ we have $a+b\in I$ and $ac\in I$. A \emph{differential ideal} is an ideal that is closed under taking derivatives, i.e. $\delta(I)\subseteq I$. We say that a (differential) ideal $I$ is \emph{generated} by the elements $f_1,\ldots,f_m\in R$ if it is the inclusion-minimal (differential) ideal containing $f_1,\ldots,f_m$. We write $\langle f_1,\ldots, f_m \rangle$ for the ideal and $\langle f_1,\ldots,f_m \rangle^{(\infty)}$ for the differential ideal generated by $f_1,\ldots,f_m$. We say that an ideal $I \subseteq R$ is \emph{prime} if $ab\in I$ for $a,b\in R$ implies $a\in I$ or $b\in I$.
\end{definition}

\begin{definition}
    Let $\mathbb{K}$ be an algebraically closed field. An \emph{algebraic set} or an \emph{affine variety} in the linear space $\mathbb{K}^n$ is the common zero set of finitely many polynomials in $\mathbb{K}[x_1,\ldots,x_n]$. If the variety is defined by the vanishing of a single polynomial, we call it a \emph{hypersurface}.
\end{definition}

We denote by $\QQ$ the field of rational numbers, by $\RR$ the field of real numbers and by $\CC$ that of complex numbers. Let $\KK$ be a field such that $\QQ \subseteq \KK \subseteq \CC$ and let $\LL$ be a field extension of $\KK$.  Let $\KK[u_1,\ldots,u_m]$ be polynomials in new indeterminates $u_i$.
In what follows, we choose $\LL$ to be the field of rational functions $\KK(u_1,\ldots,u_m)$.

\begin{definition}
Let $F \in \LL[y_0,\ldots,y_n]$. 
Associated to~$F$, we denote by $\Va(F)$ the algebraic set $$\Va(F)=\{(b_0,\ldots,b_n)\in \overline{\LL}^{n+1} \mid F(b_0,\ldots,b_n)=0\},$$
where $\overline{\LL}$ is the algebraic closure of $\LL$. We call $\Va(F)$ the \textit{corresponding hypersurface} of $F$ (over $\overline{\LL}$).
\end{definition}

For brevity in what follows we will denote tuples of variables or functions by bold letters. 

\begin{definition}
    A \textit{(uni)rational parametrization} of $\Va(F)$ over a field $\mathbb{F}$, where $\LL \subseteq \mathbb{F} \subseteq \overline{\LL}$, is a tuple of rational functions $\mathbf{P}(\mathbf{x}) = (P_0(\mathbf{x}),\ldots,P_n(\mathbf{x})) \in \mathbb{F}(\mathbf{x})^{n+1}$ such that $F(\mathbf{P}(\mathbf{x}))=0$ and the Jacobian matrix of the map of vector spaces defined by $\mathbf{P}(\mathbf{x})$ has rank $n$ at almost every point. %maximal rank \textcolor{red}{i.e. its rank is equal to the number of indeterminates $\# \mathbf{x}$, $n$}. 
If such a rational parametrization exists, we say that $\Va(F)$ is \textit{parametrizable} or \textit{unirational} (over $\mathbb{F}$). 
Moreover, if $\mathbf{P}$ admits a rational inverse almost everywhere (i.e. everywhere except maybe a proper algebraic subset of $\Va(F)$) and thus the equality of fields $\mathbb{F}(\mathbf{P}(\mathbf{x}))=\mathbb{F}(\mathbf{x})$ holds, we call $\mathbf{P}$ \textit{proper} or \textit{birational} and $\Va(F)$ \textit{rational}. More details on birational maps are available in the introductory reference~\cite{shafarevich1994basic}.
\end{definition}

If $n \in \{1,2\}$, $\Va(F)$ is unirational over $\overline{\LL}$ if and only if it is rational~\cite{sendra2008rational,schicho1998rational}. In this case, rational parametrizations of $\Va(F)$ can be computed algorithmically (see Remark~\ref{rem-param}). 
Let us define the \emph{degree} of a tuple of rational functions $\mathbf{s}(\mathbf{x}) \in \overline{\LL}(\mathbf{x})^{n}$, where each component has coprime denominator and numerator, to be the maximum of the degrees of all numerators and denominators.
If $\mathbf{P}(\mathbf{x}) \in \overline{\LL}(\mathbf{x})^{n+1}$ is a proper parametrization of $\Va(F)$, then all other proper parametrizations of $\Va(F)$ are related by reparametrizations $\mathbf{P}(\mathbf{s}(\mathbf{x}))$, where $\mathbf{s}(\mathbf{x}) \in \overline{\LL}(\mathbf{x})^{n}$ is a birational transformation~\cite[Lemma 3.1]{arrondo1997parametric}. When $n=1$, this is equivalent to requiring that $\mathbf{s}$ has degree one and thus is a M\"obius transformation.

\begin{remark}
We have presented two ways of defining the hypersurface $\mathbb{V}(F)$: by giving its defining equation and by giving its parametrization. The problem of recovering the defining equation from a given parametrization of $\mathbb{V}(F)$ is known at the \emph{implicitization problem}~\cite[Section 3.3]{cox1997ideals}.
\end{remark}

Before we move on to defining the realization problem, we make two general comments about our setup. Firstly, we concentrate on realizations of systems represented by IO-equations. This approach is pursued for instance in \cite{sontag1993rational} but is different from e.g. \cite{nvemcova2009realization, nvemcova2010realization} where the system is represented by a response map rather than by an IO-equation. We chose the IO-equations approach because it allows us to use constructive methods of computational and differential algebra and provide concrete algorithms for finding realizations. Secondly, we will treat the rational realization problem in a purely algebraic way, by treating the input, state and output variables as formal variables that can be differentiated. This can be related to the framework of analytic functions by using Seidenberg's embedding theorem \cite{Seid1958} and Ritt's theorem of zeros \cite[p. 176]{Ritt}. Thus, $\mathbf{u}$ and $\mathbf{x}$ in our computations (although being formal variables) \emph{can be thought of} as analytic functions of time. We make this somewhat more precise in the following sections.

\subsection{Input-Output Equations and Rational Realizations}
Let $x_1,\ldots,x_n,y,u_1,\ldots,u_m$ be differential unknowns depending on a differential indeterminate $t$ and $x_i'=\frac{d}{dt}x_i$ denote the usual derivative w.r.t. $t$. In what follows, we treat our unknowns as formal variables and $\frac{d}{dt}$ as a formal derivation on the corresponding field of differential rational functions $\mathbb{K}(\mathbf{u}^{(\infty)}, \mathbf{x}^{(\infty)}, y^{(\infty)})$.

\begin{definition}
    By a \emph{rational system in state space form} we mean a system of ODEs in the formal variables $\mathbf{x}, \mathbf{u}$ and $y$ of the following form:
%Let $\Sigma$ be a system of differential equations of the form
%\begin{equation}\label{eq-realization}
%\left\{ \quad 
%\begin{array}{rcl}
%x_1'&=&p_1(\mathbf{u},\mathbf{x}),\\
% & \vdots & \\
%x_n'&=&p_n(\mathbf{u},\mathbf{x}),\\
%y&=&q(\mathbf{u},\mathbf{x}).
%\end{array}\right.
%\end{equation}

\begin{equation}\label{eq-realization}
\Sigma = \begin{cases}
    x_1' = p_1(\mathbf{u},\mathbf{x}),\\
   \,\,\,\,  \vdots  \\
x_n'=p_n(\mathbf{u},\mathbf{x}),\\
y=q(\mathbf{u},\mathbf{x}).
\end{cases}
\end{equation}
where the right hand sides are rational functions, in reduced form, in $\KK(\mathbf{u},\mathbf{x})$ (i.e. no dependencies on the derivatives of $\mathbf{u}$ are allowed). 
\end{definition}

Let us note that in this paper we only treat the case of a single output variable $y$, but in general there may be several outputs.

\begin{remark}
    Our notion of a rational system corresponds to that in \cite[Definition 3.1]{nvemcova2009realization} with the variety $X$ being the whole space $\mathbb{K}^n$ and the input space $U$ being some neighborhood of the origin. 
\end{remark}

We  use the notation $\mathbf{x}'=\mathbf{p}(\mathbf{u},\mathbf{x}), y=q(\mathbf{u},\mathbf{x})$ for~\eqref{eq-realization}. 

\begin{definition} \label{def:realizations}
    A rational system $\Sigma$ as in \eqref{eq-realization} defines a prime differential ideal~\cite[Lemma 3.2]{hong2020global} $$I_\Sigma := \langle x_i' \cdot \denom(p_i)-\num(p_i), y \cdot \denom(q)-\num(q) \rangle : Q^{(\infty)}$$ in the ring of differential polynomials $\KK[\mathbf{u}^{(\infty)},\mathbf{x}^{(\infty)},y^{(\infty)}]$, where $Q$ is the common denominator of all right hand sides, and $\num$ and $\denom$ denote the numerator and denominator, respectively. 
There exists an irreducible differential polynomial $F \in \KK[\mathbf{u}^{(\infty)},y^{(\infty)}]$ such that~\cite[Remark 2.20]{dong2021differential}
\begin{equation}\label{eq-saturation}
\langle F \rangle^{(\infty)} : S_F^{\infty} = I_\Sigma \cap \KK[\mathbf{u}^{(\infty)},y^{(\infty)}]
\end{equation}
where $S_F$ is the \emph{separant} $\frac{\partial\,F}{\partial y^{(n)}}$ of $F$ or order $n$, and $I : a^{\infty}$ denotes the \emph{saturation} $\{ G \in R \mid \exists N \in \NN_0 : a^N\,b \in I \}$ of an ideal $I$ in a ring $R$ w.r.t. $a \in R$. 
We call the implicit equation $F=0$ defined by such a differential polynomial $F$ the \textit{(differential-algebraic) input-output equation} of $\Sigma$; shortly the IO-equation. 
Conversely, for a given irreducible differential polynomial $F$, a system $\Sigma$ of the form~\eqref{eq-realization} such that $F=0$ is its IO-equation is called a (rational) \textit{realization} of $F$. 
Let us note that an IO-equation of a given realization is unique up to multiplication with units, but there might be various realizations for a given irreducible differential polynomial or none at all. If at least one rational realization exists, we say that $F$ is \textit{(rationally) realizable}. We will omit the word ``rationally'' for the rest of the paper.
\end{definition}

\begin{problem}
    By a \emph{rational realization problem} we mean the following problem. Given $F \in \mathbb{K}[\mathbf{u}^{(\infty)}, y^{(\infty)}]$, find a system $\Sigma$ as in \eqref{eq-realization} such that $F=0$ is its IO-equation. From the point of view of differential algebra, this problem is the inverse problem of \emph{differential elimination} of state variables $\mathbf{x}$ from $\Sigma$. In this paper, we study the rational realization problem with additional properties: we ask for a system $\Sigma$ that is globally observable and/or defined over $\mathbb{R}$. Precise statements of these problems are given in Sections \ref{sec-proper} and \ref{sec-real}.
\end{problem}

\begin{example}
We will use \cite[Example 6.3]{pavlov2022realizing} to illustrate the introduced concepts.
Consider the following modified predator-prey model:
%$s=(x_1^2,x_2)$
\begin{equation*}
\Sigma_0 = 
\begin{cases}
  x_1' = k_1 x_1 - k_2 x_1 x_2,\\
  x_2' = -k_3 x_2 + k_4 x_1^2 x_2 + k_5 u,\\
  y = x_1^2,
\end{cases}
\end{equation*}
where $k_1, \ldots, k_5$ are constants.
%k_1=2d_1, k_2=2d_2
Using the software~\cite{dong2021differential}, one computes the following IO-equation for $\Sigma_0$ as in \eqref{eq-saturation}:
$$F := y y'' - 2k_1 k_3 y^2 + 2k_1 k_4 y^3 + k_3 y y' + 2k_2 k_5 y^2 u - k_4 y^2 y' - (y')^2 = 0.$$
The system $\Sigma_0$ is a realization of $F$. As it turns out, $\Sigma_0$ is not a globally observable realization. 
The realization
\begin{equation*} \Sigma =
\begin{cases}
  x_1' = 2k_1 x_1 - 2k_2 x_1 x_2,\\
  x_2' = -k_3 x_2 + k_4 x_1 x_2 + k_5 u,\\
  y = x_1,
\end{cases}
\end{equation*}
obtained in \cite{pavlov2022realizing}, however, is a globally observable realization of $F$.
%as it can be verified using Theorem \ref{prop-properrealization3}. 
For real constants $k_1, \ldots, k_5$, $\Sigma$ is also a real realization. For $\mathbf{k}=(1,\mathrm{i},1,1,\mathrm{i})$, however, $\Sigma|_{\mathbf{k}}$ is not a real realization (and neither is $\Sigma_0|_{\mathbf{k}}$) of
$$F|_{\mathbf{k}} = y y'' - 2y^2 + 2y^3 + y y' - 2y^2 u - y^2 y' - (y')^2 = 0.$$
The system
\begin{equation*} \Sigma|_{\mathbf{k}} =
\begin{cases}
  x_1' = 2x_1 - 2\mathrm{i}x_1 x_2,\\
  x_2' = -x_2 + x_1 x_2 + \mathrm{i}u,\\
  y = x_1,
\end{cases}
\end{equation*}
can be transformed by replacing $(x_1, x_2)$ with $(s_1(\mathbf{x}),s_2(\mathbf{x}))=(x_1,\mathrm{i}\,x_2)$ into the real realization (see~\eqref{eq-rerealization})
\begin{equation*}
\begin{cases}
  x_1' = 2x_1+2x_1x_2,\\
  x_2' = -x_2+x_1x_2+u,\\
  y = x_1.
\end{cases}
\end{equation*}
\end{example}

\begin{remark}
    From now on we will focus on the case of a single input variable $u$. This is done in order not to overload the notation of the paper. Our techniques do not rely on there being a single input-variable $u$, and we expect that the same results hold in the presence of multiple inputs.
\end{remark}

\begin{remark}\label{rem-Lp}
    The same letter $n$ in the realization and the IO-equation in Definition \ref{def:realizations} is justified by~\cite[Theorem 3.2]{pavlov2022realizing}, which holds by the same proof for every field of characteristic zero. Alternatively, one can see this by implicitizing the following parametrization~\eqref{eq:parametrization}.

If $F$ has a realization $\mathbf{x}'=\mathbf{p}(u,\mathbf{x}), y=q(u,\mathbf{x})$, then 
\begin{equation}\label{eq:parametrization}
\mathbf{P}=(q,\LD_{\mathbf{p}}(q),\ldots,\LD_{\mathbf{p}}^n(q)),
\end{equation}
where $\LD_{\mathbf{p}}(q)=\sum_{i=1}^n p_i\,\partial_{x_i}q + D_u(q)$ is the Lie-derivative of $q$ w.r.t. $\mathbf{p}$, $\LD_{\mathbf{p}}^i$ is the iterative application of $\LD_{\mathbf{p}}$ $i$-many times, and $D_u$ is defined as the differential operator $D_u(q)=\sum_{j \ge 0} u^{(j+1)} \cdot \partial_{u^{(j)}}q$, defines a parametrization of $\Va(F)$. 
Thus, we have found a necessary condition for the existence of realizations. 
Note that the construction of the parametrization~\eqref{eq:parametrization} from the realization does not require field extensions and if $\mathbf{p},q$ are real, then also $\mathbf{P}$ is real. Similarly, if $\mathbf{p},q$ are polynomial, then also $\mathbf{P}$ is polynomial. Polynomial realizations are an interesting special case of realizations, but will not be studied in this paper.

In~\cite[Theorem 3.1]{forsman1993rational} it is shown that a realization (over $\overline{\KK}$) of $F \in \KK[y^{(\infty)}]$ exists if and only if the corresponding hypersurface $\Va(F)$ is unirational. 
It is not true, however, that every parametrization of $F \in \KK[u^{(\infty)},y^{(\infty)}]$ over $\overline{\KK}(u^{(\infty)})$ leads to a realization (see Example \ref{ex-nonrealizable}).
\end{remark}

\begin{lemma}[Lemma 3.1 in~\cite{pavlov2022realizing}]\label{lem:realizibility}
Let $F \in \KK[u^{(\infty)},y^{(\infty)}]$ be an irreducible polynomial of order $n$ w.r.t. $y$. Then, $F$ is realizable if and only if there exists a rational parametrization $\mathbf{P} =(P_0,\ldots,P_n) \in \overline{\KK}(u^{(\infty)})(\mathbf{x})^{n+1}$ of $\Va(F)$ such that $P_0 \in \overline{\KK}(u)(\mathbf{x})$ and 
\begin{equation}\label{eq-param}
\begin{array}{rcl}
\mathbf{z}&=& \mathcal{J}(P_0,\ldots,P_{n-1})^{-1} \cdot (P_1-D_u(P_0),\ldots,P_n-D_u(P_{n-1}))^T\end{array}
\end{equation}
is in $\overline{\KK}(u,\mathbf{x})^{n}$ where $\mathcal{J}$ denotes the Jacobian (w.r.t. $\mathbf{x}$). In the affirmative case, the realization is $\mathbf{x}'=\mathbf{z},y=P_0$.
\end{lemma}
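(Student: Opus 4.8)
The plan is to prove the two implications separately, with the common computational core being the recursion that the iterated Lie derivatives satisfy by construction. Unwinding the definition $\LD_{\bold{p}}(q)=\sum_{i=1}^n p_i\,\partial_{x_i}q + D_u(q)$ applied to $P_i$ gives, for each $i$, the identity $P_{i+1}-D_u(P_i)=\sum_{j=1}^n p_j\,\partial_{x_j}P_i$, which is precisely the $i$-th row of the matrix equation
\begin{equation*}
(P_1-D_u(P_0),\ldots,P_n-D_u(P_{n-1}))^T = \mathcal{J}(P_0,\ldots,P_{n-1})\cdot(p_1,\ldots,p_n)^T.
\end{equation*}
Everything then reduces to solving this linear system for $\bold{p}$, which is possible exactly when $\mathcal{J}(P_0,\ldots,P_{n-1})$ is invertible, and in that case the solution is the $\bold{z}$ of~\eqref{eq-param}.

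For the forward implication I would start from a realization $\bold{x}'=\bold{p}(u,\bold{x}),\,y=q(u,\bold{x})$ and take $P=(q,\LD_{\bold{p}}(q),\ldots,\LD_{\bold{p}}^n(q))$, which is a parametrization of $\Va(F)$ by~\eqref{eq:parametrization}. Since $q\in\overline{\KK}(u,\bold{x})$ depends only on $u$ and $\bold{x}$, the condition $P_0\in\overline{\KK}(u)(\bold{x})$ holds automatically. Inverting the displayed linear system then recovers $\bold{z}=\bold{p}\in\overline{\KK}(u,\bold{x})^n$, so $P$ is exactly a parametrization of the required form.

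For the backward implication, given a parametrization $P$ with $P_0\in\overline{\KK}(u)(\bold{x})$ and $\bold{z}\in\overline{\KK}(u,\bold{x})^n$, I would define the candidate realization $\bold{x}'=\bold{z},\,y=P_0$ and show its IO-equation is $F$. Reading the defining equation~\eqref{eq-param} of $\bold{z}$ row by row yields $\LD_{\bold{z}}(P_{i-1})=P_i$ for $i=1,\ldots,n$, so by induction $\LD_{\bold{z}}^i(P_0)=P_i$; hence the parametrization~\eqref{eq:parametrization} attached to this system is $P$ itself. As $P$ parametrizes the irreducible hypersurface $\Va(F)$ and simultaneously parametrizes $\Va(\tilde{F})$ for the IO-equation $\tilde{F}$ of the constructed system, these two irreducible hypersurfaces must coincide; thus $F=\tilde{F}$ up to a unit, and $F$ is realizable.

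The main obstacle is the invertibility of $\mathcal{J}(P_0,\ldots,P_{n-1})$ in the forward direction. The defining property of a parametrization only guarantees maximal rank of the full $(n+1)\times n$ Jacobian of $(P_0,\ldots,P_n)$, not invertibility of its top $n\times n$ block. This is exactly where the hypothesis that $F$ has order $n$ in $y$ enters: were $P_0,\ldots,P_{n-1}$ algebraically dependent over $\overline{\KK}(u^{(\infty)})$, a nontrivial relation among $q,\LD_{\bold{p}}(q),\ldots,\LD_{\bold{p}}^{n-1}(q)$ would yield a differential polynomial in $u^{(\infty)},y^{(\infty)}$ of order strictly less than $n$ lying in the IO-ideal, contradicting that $F$ has order $n$. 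Establishing this algebraic independence, equivalently the nonvanishing of $\det\mathcal{J}(P_0,\ldots,P_{n-1})$ as a rational function, is the technical heart of the argument.
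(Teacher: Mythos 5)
Your proposal is correct and follows essentially the same route as the paper: the main equivalence is exactly the linear-system computation $(P_1-D_u(P_0),\ldots,P_n-D_u(P_{n-1}))^T=\mathcal{J}(P_0,\ldots,P_{n-1})\cdot\bold{p}^T$ read in both directions as in \cite[Lemma 3.1]{pavlov2022realizing}, and the only point the paper adds (in the paragraph immediately after the lemma) is the nonsingularity of $\mathcal{J}(P_0,\ldots,P_{n-1})$, argued just as you sketch — an algebraic dependence among $P_0,\ldots,P_{n-1}$ would yield a nonzero relation of order at most $n-1$ w.r.t.\ $y$ inside the IO-ideal, contradicting that $F$ has order $n$. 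You correctly single out this nonsingularity as the technical heart, so no gap remains.
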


Let us note that formula~\eqref{eq-param} is similar to that in~\cite[Lemma 3.1]{pavlov2022realizing}, except that we additionally invert the Jacobian. Let us assume that $\mathcal{J}(P_0,\ldots,P_{n-1})$ is singular. Then, by~\cite[Theorem 2.2]{ehrenborg1993apolarity}\footnote{The theorem is stated over $\mathbb{C}$ but the proof works for every field of characteristic zero.}, there exists $G \in \overline{\KK}(u^{(\infty)})[z_0,\ldots,z_{n-1}]$ such that $G(P_0,\ldots,P_{n-1})=0$. 
Since $F$ is assumed to be irreducible, $G(y,\ldots,y^{(n-1)}) \in \langle F \rangle$. 
On the other hand, since $G(y,\ldots,y^{(n-1)})$ is of order at most $n-1$ w.r.t. $y$, $G \notin I_\Sigma$ (cf.~\eqref{eq-saturation}), a contradiction to $\langle F \rangle \subset I_\Sigma$.

\begin{remark}\label{rem:inverse}
The construction of a realization and the corresponding parametrization are connected as follows. If $\mathbf{P}$ is a parametrization of $F$ such that the condition in Lemma~\ref{lem:realizibility} is fulfilled, then the realization given by $\mathbf{z},P_0$ is such that the corresponding parametrization $(P_0,\LD_{\mathbf{z}}(P_0),\ldots,\LD_{\mathbf{z}}^n(P_0))$ is equal to $\mathbf{P}$.
\end{remark}

\begin{remark}\label{rem-param}
Finding rational parametrizations of the hypersuface $\mathbb{V}(F)$ is in general a very difficult problem. However, when $\dim \mathbb{V}(F) = 1$ or $2$, i.e. when $\mathbb{V}(F)$ is a plane curve or a surface in three-space, this problem becomes algorithmic. For an overview of algorithms for finding rational parametrizations of curves and surfaces, see e.g.~\cite{sendra2008rational} and~\cite{schicho1998rational}. These algorithms are implemented in computer algebra systems such as \texttt{Maple}~\cite{deconinck2010computing} and \texttt{MAGMA}~\cite{beck2008software}. The algorithms for finding realizations of IO-equations presented in this paper rely on these parametrization algorithms, and can also be implemented in computer algebra systems for practical use.
\end{remark}

We will now present several general results on rational realizations of IO-equations that will be useful for us later.

\begin{proposition}\label{prop:ordern}
Let $F \in \KK[u^{(\infty)},y^{(\infty)}]$ be irreducible and of order $n$ w.r.t. $y$. If the order of $F$ w.r.t. $u$ is greater than $n$, then $F$ is not realizable.
\end{proposition}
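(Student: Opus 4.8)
Suppose, for contradiction, that $F$ admits a realization $\bold{x}'=\bold{p}(u,\bold{x})$, $y=q(u,\bold{x})$, and let $N>n$ be the order of $F$ w.r.t. $u$. By~\eqref{eq:parametrization} the tuple $P=(q,\LD_{\bold{p}}(q),\ldots,\LD_{\bold{p}}^n(q))$ is a parametrization of $\Va(F)$; in particular $F(u^{(\infty)},P_0,\ldots,P_n)=0$ holds identically. The plan is to show that every component $P_k$ has order at most $k<N$ w.r.t. $u$, so that $u^{(N)}$ enters the identity $F(u^{(\infty)},P_0,\ldots,P_n)=0$ only through the coefficients coming from $F$ and never through the $P_k$; differentiating with respect to $u^{(N)}$ then produces a nonzero polynomial of strictly smaller $u^{(N)}$-degree that still vanishes on $P$, which will contradict the irreducibility of $F$.

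First I would bound the $u$-order of the components of $P$ by induction on $k$. Since $\bold{p}$ and $q$ lie in $\KK(u,\bold{x})$, they have order $0$ w.r.t. $u$, so $P_0=q$ has $u$-order $0$. For the inductive step, write $P_{k+1}=\LD_{\bold{p}}(P_k)=\sum_{i=1}^n p_i\,\partial_{x_i}P_k+D_u(P_k)$. The operator $\sum_i p_i\,\partial_{x_i}$ does not increase the $u$-order, because the $p_i$ have $u$-order $0$, while $D_u(P_k)=\sum_{j\ge0}u^{(j+1)}\partial_{u^{(j)}}P_k$ raises it by at most one. Hence if $P_k$ has $u$-order at most $k$, then $P_{k+1}$ has $u$-order at most $k+1$, and therefore every $P_k$ has $u$-order at most $k\le n<N$; in particular $\partial_{u^{(N)}}P_k=0$ for all $k$.

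Next I would exploit this vanishing. Let $D:=\deg_{u^{(N)}}F\ge1$. Applying $\partial_{u^{(N)}}$ to $F(u^{(\infty)},P_0,\ldots,P_n)=0$ and using the chain rule together with $\partial_{u^{(N)}}P_k=0$ gives $(\partial_{u^{(N)}}F)(u^{(\infty)},P_0,\ldots,P_n)=0$. Iterating $D$ times shows that $G:=\partial_{u^{(N)}}^{\,D}F$ satisfies $G(u^{(\infty)},P_0,\ldots,P_n)=0$. In characteristic zero one has $G=D!\cdot c_D\neq0$, where $c_D$ is the leading coefficient of $F$ regarded as a polynomial in $u^{(N)}$; moreover $\deg_{u^{(N)}}G=0$, whereas $\deg_{u^{(N)}}F=D\ge1$.

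It remains to turn the vanishing of $G$ on $P$ into a divisibility relation. Since $P$ parametrizes the irreducible hypersurface $\Va(F)$, any polynomial vanishing on $P$ is a multiple of $F$, so $F\mid G$ in $\KK[u,\ldots,u^{(N)},y_0,\ldots,y_n]$. Comparing degrees in $u^{(N)}$ then forces $\deg_{u^{(N)}}G\ge\deg_{u^{(N)}}F=D\ge1$, contradicting $\deg_{u^{(N)}}G=0$. I expect the delicate point to be exactly this final step: the derivatives $u^{(j)}$ play a double role, appearing both as coefficients of $F$ (elements of $\LL=\KK(u^{(\infty)})$) and as arguments of the $P_k$. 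To make the degree comparison legitimate one must treat $u,\ldots,u^{(N)}$ as honest polynomial coordinates and view $(u,\ldots,u^{(N)},P_0,\ldots,P_n)$ as a parametrization of the irreducible hypersurface $V(F)\subseteq\overline{\KK}^{\,N+n+2}$, whose vanishing ideal equals $\langle F\rangle$; the $u^{(N)}$-degree argument then takes place in a genuine unique factorization domain, and the contradiction shows $F$ is not realizable.
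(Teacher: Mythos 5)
Your proof is correct. Both arguments hinge on the same initial observation---each component $P_k=\LD_{\bold{p}}^k(q)$ of the parametrization attached to a putative realization involves only $u,\ldots,u^{(k)}$, hence nothing beyond $u^{(n)}$---but the two conclusions are reached by genuinely different mechanisms. The paper implicitizes $P$ to obtain an IO-equation $G$ of order at most $n$ in $u$ and then invokes the uniqueness of the IO-equation up to a scalar (citing~\cite[Remark 4]{dong2021differential}) to get $F=\lambda G$, contradicting the hypothesis on the order of $F$ w.r.t.\ $u$. You instead differentiate the identity $F(u^{(\infty)},P)=0$ with respect to $u^{(N)}$, use $\partial_{u^{(N)}}P_k=0$ to conclude that the leading coefficient of $F$ in $u^{(N)}$ also vanishes on $P$, and derive a contradiction from the irreducibility of $F$ together with a degree count in $u^{(N)}$. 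Your route is more self-contained, replacing the external uniqueness result by a direct divisibility argument; the price is exactly the point you flag yourself: the hypersurface $\Va(F)\subseteq\overline{\LL}^{\,n+1}$ lives over $\LL=\KK(u^{(\infty)})$, where $u^{(N)}$ is a unit and $u^{(N)}$-degrees are meaningless, so one must either pass to the affine hypersurface in $\overline{\KK}^{\,N+n+2}$ parametrized by $(u,\ldots,u^{(N)},P)$ (dominance onto a hypersurface with vanishing ideal $\langle F\rangle$ follows from the maximal rank of the Jacobian of $P$, as in the comment after Lemma~\ref{lem:realizibility}), or stay over $\LL$ and use Gauss's lemma to upgrade $F\mid G$ in $\LL[y_0,\ldots,y_n]$ to divisibility in $\KK[u,\ldots,u^{(N)},y_0,\ldots,y_n]$. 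Either repair is routine, so the argument stands; note also that a single differentiation already suffices, since $\partial_{u^{(N)}}F$ is nonzero, of strictly smaller $u^{(N)}$-degree, and vanishes on $P$---iterating down to degree zero is harmless but unnecessary.
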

\begin{proof}
Suppose $F$ is realizable but the order of $F$ w.r.t. $u$ is greater than $n$.
Let $\mathbf{x}'=\mathbf{p}(u,\mathbf{x}), y=q(u,\mathbf{x})$ be a rational realization of $F$.
By~\cite[Theorem 3.2]{pavlov2022realizing}, it has at most $n$ states, and the corresponding parametrization is given by \[\mathbf{P}(u,\ldots,u^{(n)},\mathbf{x})=(q,\LD_{\mathbf{p}}(q),\ldots,\LD_{\mathbf{p}}^n(q))\]
which is of order $n$ w.r.t. $u$. 
By implicitizing $\mathbf{P}$, i.e. computing the (algebraic) elimination ideal $I \cap \CC[u,\ldots,u^{(n)},y_0,\ldots,y_n]$ with
\[ I := \langle \denom(P_i) \cdot y_i - \num(P_i) \text{ for } i \in \{0,\ldots,n\}, \denom(P_0) \cdots \denom(P_n) \cdot w - 1 \rangle ,\]
we obtain an equation $G(y,\ldots,y^{(n)},u,\ldots,u^{(n)})=0$ of order at most $n$ w.r.t. $u$ with $\mathbf{x}'=\mathbf{p}(u,\mathbf{x}), y=q(u,\mathbf{x})$ as a realization. 
By~\cite[Remark 4]{dong2021differential}, however, $F=\lambda \cdot G$ for some $\lambda \in \CC$, which means the order of $F$ w.r.t. $u$ is also at most $n$, leading to a contradiction.
\end{proof}

In the following, based on Proposition~\ref{prop:ordern}, we define the order of $F$ as the order of $F$ w.r.t. the output $y$ and assume that the order of $F$ w.r.t. the input $u$ is at most the order w.r.t. $y$. 
Moreover, we will omit dependencies on $u$ and its derivatives in intermediate steps in order to make the paper more readable.

\begin{example}\label{ex-nonrealizable}
Consider $F=(y'-uy)^3+uy^2$ with the parametrization $\mathbf{P}=\left( \frac{u}{(u-x)^3},\frac{ux}{(u-x)^3} \right)$. 
Then~\eqref{eq-param} is
$$z = \frac{ux(u-x)+(2u+x)u'}{3u}$$
and does not lead to a realization because $z$ effectively depends on $u'$.
\end{example}

It is in general hard to verify whether given $F \in \KK[u^{(\infty)},y^{(\infty)}]$ is realizable by only using the condition in Lemma~\ref{lem:realizibility}. For instance, we did not show that $F$ in Example~\ref{ex-nonrealizable} is not realizable and just know that one particular parametrization $\mathbf{P}$ does not correspond to a realization. In~\cite{pavlov2022realizing}, the authors give necessary and sufficient conditions on the parametrizations of $F$ for some special cases. 
Let us recall them here.
\begin{proposition}[Prop. 3.5 in~\cite{pavlov2022realizing}]\label{prop:uorderzero}
Let $F \in \KK[u,y^{(\infty)}]$ be irreducible and of order $n$. Then there exists a rational realization of $F$ if and only if $\Va(F)$ has a rational parametrization $\mathbf{P} \in \overline{\KK}(u)(\mathbf{x})^{n+1}$ such that $P_0,\ldots,P_{n-1} \in \overline{\KK}(\mathbf{x})$.
\end{proposition}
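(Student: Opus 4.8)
The plan is to prove both implications through Lemma~\ref{lem:realizibility}, exploiting that $F$ has order $0$ in $u$ so that $D_u$ acts on the relevant functions simply as $u'\,\partial_u$.

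For the ``if'' direction, suppose $P \in \overline{\KK}(u)(\bold{x})^{n+1}$ is a parametrization of $\Va(F)$ with $P_0,\ldots,P_{n-1} \in \overline{\KK}(\bold{x})$. Since $P_0,\ldots,P_{n-1}$ are free of $u$ and all its derivatives, $D_u(P_i)=0$ for $i=0,\ldots,n-1$, so the right-hand vector in~\eqref{eq-param} collapses to $(P_1,\ldots,P_n)^T$. The Jacobian $\mathcal{J}(P_0,\ldots,P_{n-1})$ has entries in $\overline{\KK}(\bold{x})$ and is nonsingular by the argument following Lemma~\ref{lem:realizibility} (singularity would force, via apolarity, a nonzero relation $G(P_0,\ldots,P_{n-1})=0$ of order $<n$ vanishing on $\Va(F)$, contradicting irreducibility and $\ord_y F = n$). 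Hence $\bold{z}=\mathcal{J}(P_0,\ldots,P_{n-1})^{-1}(P_1,\ldots,P_n)^T$ lies in $\overline{\KK}(u)(\bold{x})^n \subseteq \overline{\KK}(u,\bold{x})^n$ and involves no derivative of $u$; by Lemma~\ref{lem:realizibility} this yields the realization $\bold{x}'=\bold{z}$, $y=P_0$.

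For the ``only if'' direction, I would start from a realization $\bold{x}'=\bold{p}(u,\bold{x})$, $y=q(u,\bold{x})$ (whose right-hand sides lie in $\overline{\KK}(u,\bold{x})$ and hence carry no derivatives of $u$) together with its parametrization $P=(q,\LD_{\bold{p}}(q),\ldots,\LD_{\bold{p}}^n(q))$ from~\eqref{eq:parametrization}. Differentiating the identity $F(u,P_0,\ldots,P_n)=0$ with respect to $u^{(k)}$ and using $\partial_{u^{(k)}}F=0$ for $k\ge 1$ (order $0$ in $u$) gives $\sum_{i=0}^n \partial_{y_i}F \cdot \partial_{u^{(k)}}P_i = 0$ for every $k\ge 1$. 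The crux is then an induction on $m=0,\ldots,n-1$ showing $\partial_u P_m=0$: granting $P_0,\ldots,P_{m-1}\in\overline{\KK}(\bold{x})$, the term $D_u(P_{m-1})$ vanishes, so $P_m=\sum_j p_j\,\partial_{x_j}P_{m-1}$ has order $0$ in $u$, and each further application of $\LD_{\bold{p}}$ raises the $u$-order by exactly one with leading coefficient $\partial_u P_m$. Thus $P_n=\LD_{\bold{p}}^{\,n-m}(P_m)$ has $u$-order exactly $n-m$ with $\partial_{u^{(n-m)}}P_n=\partial_u P_m$, while every $P_i$ with $i<n$ contributes nothing to the $u^{(n-m)}$-coefficient. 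Reading off the relation at $k=n-m$ therefore leaves $S_F\cdot\partial_u P_m=0$, and since the separant $S_F=\partial_{y^{(n)}}F$ has lower degree in $y^{(n)}$ than $F$ it does not vanish on $\Va(F)$, so $\partial_u P_m=0$, i.e. $P_m\in\overline{\KK}(\bold{x})$. This produces exactly the required parametrization, with $P_0,\ldots,P_{n-1}\in\overline{\KK}(\bold{x})$ and $P_n=\sum_j p_j\,\partial_{x_j}P_{n-1}\in\overline{\KK}(u)(\bold{x})$.

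I expect the bookkeeping in this induction---tracking precisely which derivatives $u^{(k)}$ occur in each iterated Lie derivative $\LD_{\bold{p}}^i(q)$ and identifying the leading coefficient as $\partial_u P_m$---to be the main obstacle; everything else reduces to the observation that $D_u$ annihilates $u$-free functions and to the nonvanishing of $S_F$ on the hypersurface. It may be cleanest to isolate the auxiliary statement ``if $h$ has order $0$ in $u$, then $\LD_{\bold{p}}^s(h)$ has $u$-order at most $s$ and the coefficient of $u^{(s)}$ equals $\partial_u h$'' and prove it by a short induction on $s$, after which the main induction is immediate.
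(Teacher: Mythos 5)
The paper itself does not prove this proposition---it is imported verbatim from \cite{pavlov2022realizing}---so there is no in-paper proof to compare against; judged on its own, your argument is correct and complete. The ``if'' direction is the direct application of Lemma~\ref{lem:realizibility} (with the invertibility of $\mathcal{J}(P_0,\ldots,P_{n-1})$ justified exactly as in the paper's remark following that lemma), and the ``only if'' induction---resting on the auxiliary fact that for $h\in\overline{\KK}(u,\bold{x})$ the iterate $\LD_{\bold{p}}^s(h)$ has $u$-order at most $s$ with $\partial_{u^{(s)}}\LD_{\bold{p}}^s(h)=\partial_u h$, combined with the nonvanishing of $S_F$ along the dense image of $P$---is sound and is essentially the order-counting argument of the cited source, so the bookkeeping you flag as the main obstacle does go through as you describe.
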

\begin{proposition}[Prop. 3.6 in~\cite{pavlov2022realizing}]\label{prop:uorderone}
Let $F \in \KK[u,u',y^{(\infty)}]$ be irreducible and of order $n$. Then there exists a rational realization of $F$ if and only if $\Va(F)$ has a rational parametrization $\mathbf{P} \in \overline{\KK}(u,u')(\mathbf{x})^{n+1}$ such that $P_{n-1} \in \overline{\KK}(u)(\mathbf{x})$, $P_{n} \in \overline{\KK}(u,u')(\mathbf{x})$ with $\partial_{u'}P_{n}=\partial_{u}P_{n-1}$ and, for $n>1$, $P_0,\ldots,P_{n-2} \in \overline{\KK}(\mathbf{x})$.
\end{proposition}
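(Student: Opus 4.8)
The plan is to read this as a refinement of Lemma~\ref{lem:realizibility} in which one keeps track of the order in $u$ of each component of the parametrization. Write $\ord_u(g)$ for the order in $u$ of a rational function $g$ in $u,u',\ldots$ and $\bold{x}$. The one computational fact I would use repeatedly is that, because each $p_j\in\overline{\KK}(u,\bold{x})$ has $\ord_u=0$, the summand $\sum_j p_j\,\partial_{x_j}g$ does not increase $\ord_u(g)$, whereas $D_u(g)=\sum_{j\ge0}u^{(j+1)}\partial_{u^{(j)}}g$ raises it by exactly one whenever $g$ actually depends on its top $u$-derivative, since it introduces the fresh variable $u^{(\ord_u(g)+1)}$ linearly with nonzero coefficient. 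Consequently $\ord_u\bigl(\LD_{\bold{p}}(g)\bigr)=\ord_u(g)+1$ once $\ord_u(g)\ge0$, while $\LD_{\bold{p}}$ sends a $u$-free function to one of order at most $0$.

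For the forward implication I would start from a realization $\bold{x}'=\bold{p}(u,\bold{x})$, $y=q(u,\bold{x})$, form its parametrization $P=(q,\LD_{\bold{p}}(q),\ldots,\LD_{\bold{p}}^n(q))$ as in~\eqref{eq:parametrization}, and recall (from the implicitization argument in the proof of Proposition~\ref{prop:ordern}) that $P$ has the same order in $u$ as $F$, hence at most one. By the fact above, the sequence $\ord_u(P_0),\ord_u(P_1),\ldots$ stays $-\infty$ until a first index $i_0$ where $u$ appears and it becomes $0$, after which $\ord_u(P_{i_0+k})=k$. Since every $\ord_u(P_i)\le1$, the order would already exceed $1$ at $P_{i_0+2}$ unless $i_0+2>n$; and if $F$ genuinely involves $u'$ then order $1$ is attained, forcing $i_0+1\le n$. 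Hence $i_0=n-1$, which is exactly the assertion $P_0,\ldots,P_{n-2}\in\overline{\KK}(\bold{x})$, $P_{n-1}\in\overline{\KK}(u)(\bold{x})$, $P_n\in\overline{\KK}(u,u')(\bold{x})$. The compatibility condition then drops out of $P_n=\LD_{\bold{p}}(P_{n-1})=\sum_j p_j\,\partial_{x_j}P_{n-1}+u'\,\partial_u P_{n-1}$, which gives $\partial_{u'}P_n=\partial_u P_{n-1}$ because $P_{n-1}$ is free of $u'$. (If $F$ has order $0$ in $u$ the statement collapses to Proposition~\ref{prop:uorderzero}, which I would record separately.)

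For the converse I would take a parametrization $P$ with the listed properties and check the two hypotheses of Lemma~\ref{lem:realizibility}. That $P_0\in\overline{\KK}(u)(\bold{x})$ is immediate. For the requirement $\bold{z}\in\overline{\KK}(u,\bold{x})^n$ in~\eqref{eq-param}, note first that $\mathcal{J}(P_0,\ldots,P_{n-1})$ has all entries in $\overline{\KK}(u)(\bold{x})$ (the rows from $P_0,\ldots,P_{n-2}$ being even $u$-free), so its inverse is again free of $u',u'',\ldots$. I would then compute $(P_1-D_u(P_0),\ldots,P_n-D_u(P_{n-1}))^{T}$ componentwise: for $i\le n-2$ one has $D_u(P_i)=0$, so the entry is $P_{i+1}\in\overline{\KK}(u)(\bold{x})$; for the last entry, $\partial_{u'}P_n=\partial_u P_{n-1}$ together with the $u'$-freeness of $\partial_u P_{n-1}$ forces $P_n$ to be affine in $u'$, whence $P_n-D_u(P_{n-1})=P_n-u'\,\partial_u P_{n-1}=P_n|_{u'=0}\in\overline{\KK}(u)(\bold{x})$. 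Thus $\bold{z}\in\overline{\KK}(u)(\bold{x})^n\subseteq\overline{\KK}(u,\bold{x})^n$, and Lemma~\ref{lem:realizibility} yields the realization $\bold{x}'=\bold{z}$, $y=P_0$.

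The step I expect to be the main obstacle is the order bookkeeping in the forward direction: one must argue cleanly that the realization-induced parametrization inherits the order in $u$ of $F$ (so that it is genuinely $\le1$), and that after $u$ first enters the orders strictly increase with no cancellation, which is what pins $i_0=n-1$ and cleanly separates the cases $n=1$ and $n>1$. Once the affine-linearity of $P_n$ in $u'$ has been extracted from the compatibility condition, the converse is a routine verification against Lemma~\ref{lem:realizibility}.
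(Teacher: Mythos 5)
This paper does not actually prove Proposition~\ref{prop:uorderone}; it is imported from the cited reference, so the comparison can only be against the natural argument. Your overall strategy --- order bookkeeping for $\LD_{\bold{p}}$ in the forward direction, and a direct verification of the hypotheses of Lemma~\ref{lem:realizibility} in the converse --- is the expected one, and your converse direction is complete: the observation that $\partial_{u'}P_n=\partial_u P_{n-1}$ with $\partial_u P_{n-1}$ free of $u'$ forces $P_n$ to be affine in $u'$, so that $P_n-D_u(P_{n-1})=P_n|_{u'=0}\in\overline{\KK}(u)(\bold{x})$ and~\eqref{eq-param} lands in $\overline{\KK}(u,\bold{x})^n$, is exactly the point.

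There is, however, a gap in the forward direction, at precisely the step you flag as the main obstacle. You assert that $P$ ``has the same order in $u$ as $F$'' and attribute this to the implicitization argument in the proof of Proposition~\ref{prop:ordern}. That argument only yields the inequality you do \emph{not} need: implicitizing $P$ produces an equation whose order in $u$ is at most that of $P$, hence $\ord_u(F)\le \ord_u(P)$. What your argument requires is $\ord_u(P)\le \ord_u(F)\le 1$, and this does not follow from implicitization. It is true, but needs a separate argument, for instance: suppose $k:=\ord_u(P_n)\ge 2$; by your strict-increase lemma, $P_n$ is the only component of order $k$, so differentiating the identity $F(u,u',P_0,\ldots,P_n)=0$ with respect to $u^{(k)}$ gives $S_F(P)\cdot\partial_{u^{(k)}}P_n=0$, where $S_F=\partial_{y^{(n)}}F$ is the separant. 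Since $S_F$ is a nonzero polynomial of lower degree in $y^{(n)}$ than the irreducible $F$, it does not vanish identically on $\Va(F)$, hence not on the Zariski-dense image of the parametrization $P$; therefore $\partial_{u^{(k)}}P_n=0$, a contradiction. With this inserted, your conclusion that $u$ can first enter only at index $n-1$ (so $P_0,\ldots,P_{n-2}$ are $u$-free, $P_{n-1}$ has order $0$, $P_n$ has order at most $1$) and the derivation of $\partial_{u'}P_n=\partial_uP_{n-1}$ from $P_n=\LD_{\bold{p}}(P_{n-1})$ go through.
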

The differential polynomial $F$ in Example~\ref{ex-nonrealizable} does not fulfill the condition in Proposition~\ref{prop:uorderzero}, because $P_0$ would have to be independent of $u$, i.e., the indeterminate $u$ would have to not appear in $P_0$. We remark that such a parametrization, without radicals in $u$, does not exist and hence, there exists no realization of $F$ in this example.

\begin{definition}
Let $F \in \KK[u^{(\infty)},y^{(\infty)}]$ be irreducible and of order $n$, and let $\mathbf{P}$ be a rational parametrization of $\Va(F)$. We say that $\mathbf{s} \in \overline{\KK}(\mathbf{x})^n$ is a \emph{Lie-suitable reparametrization} of $\mathbf{P}$ if $\mathbf{s}$ defines a reparametrization, i.e. its Jacobian-matrix w.r.t. $\mathbf{x}$ has full rank. 
Note that $\mathbf{s}$ is independent of $u$ and its derivatives.
\end{definition}

The following proposition gives a relation among the realizations of the same IO-equation.

\begin{proposition}\label{prop:reparametrizationHigherOrder}
Let $F \in \KK[u^{(\infty)},y^{(\infty)}]$ be irreducible and of order $n$.
Let $\mathbf{x}'=\mathbf{p}(u,\mathbf{x}), y=q(u,\mathbf{x})$ and $\mathbf{x}'=\mathbf{f}(u,\mathbf{x}), y=g(u,\mathbf{x})$ be realizations of $F$ such that the corresponding parametrizations $\mathbf{P}=(q,\LD_{\mathbf{p}}(q),\ldots,\LD^n_{\mathbf{p}}(q))$ and $\mathbf{Q}=(g,\LD_{\mathbf{f}}(g),\ldots,\LD_{\mathbf{f}}^n(g))$ fulfill $\mathbf{P}(\mathbf{s})=\mathbf{Q}(\mathbf{x})$ for some $\mathbf{s} \in \overline{\KK}(u,\ldots,u^{(n)},\mathbf{x})^n$. 
Then $\mathbf{s} \in \overline{\KK}(x)^n$ is Lie-suitable. 
Moreover, for every Lie-suitable reparametrization $\mathbf{s}$,
\begin{equation}\label{eq-rerealization}
\mathbf{x}'= \mathcal{J}(\mathbf{s}(\mathbf{x}))^{-1} \cdot \mathbf{p}(u,\mathbf{s}), \, y=q(u,\mathbf{s})
\end{equation}
is another realization of $F$.
\end{proposition}
\begin{proof}
Let us consider the parametrization $\mathbf{P}(\mathbf{s})=(q(u,\mathbf{s}),\LD_{\mathbf{p}}(q),\ldots,\LD^n_{\mathbf{p}}(q))$. Since $p(u,\mathbf{s})=g(u,\mathbf{x})$, one has that $\mathbf{s}$ is independent of $u',\ldots,u^{(n)}$.
Let us use the notation $\partial_u \mathbf{s} = (\partial_us_1,\ldots,\partial_us_n)^T$. 
Then~\eqref{eq-param} is
\begin{align*}
& (\mathcal{J}(P_0,\ldots,P_{n-1})(\mathbf{s}) \cdot \mathcal{J}(\mathbf{s}))^{-1} \cdot (P_1(\mathbf{s}) - D_u(P_0(\mathbf{s})),\ldots,P_n(\mathbf{s})-D_u(P_{n-1}(\mathbf{s})))^T \\ 
&= \mathcal{J}(\mathbf{s})^{-1} \cdot \mathcal{J}(P_0,\ldots,P_{n-1})(\mathbf{s})^{-1} \cdot (\LD_{\mathbf{p}}(P_0)(\mathbf{s})-D_u(P_0(\mathbf{s})),\ldots,\LD_{\mathbf{p}}(P_{n-1})(\mathbf{s})-D_u(P_{n-1}(\mathbf{s})))^T \\
&= \mathcal{J}(\mathbf{s})^{-1} \cdot \mathcal{J}(P_0,\ldots,P_{n-1})(\mathbf{s})^{-1} \cdot (\mathcal{J}(P_0,\ldots,P_{n-1})(\mathbf{s}) \cdot (\mathbf{p}- u' \cdot \partial_u \mathbf{s})) \\
&= \mathcal{J}(\mathbf{s})^{-1} \cdot (\mathbf{p} - u' \cdot \partial_u \mathbf{s}),
\end{align*}
where in the second step we have used the chain rule applied to $D_u$ and the independence of $\mathbf{s}$ from derivatives of $u$. The result has to be independent of $u'$. Thus, $\partial_u \mathbf{s}=0$. For such $\mathbf{s} \in \overline{\KK}(\mathbf{x})^n$, we obtain a new realization of $F$ and it is of the form~\eqref{eq-rerealization}.
\end{proof}

\begin{definition}
    Based on Proposition~\ref{prop:reparametrizationHigherOrder}, we may call a realization as in~\eqref{eq-rerealization} a \textit{reparametrization} of the given realization $\mathbf{x}'=\mathbf{p}(u,\mathbf{x}), y=q(u,\mathbf{x})$.
\end{definition}

\begin{remark}\label{rem:nonrationalreparametrization}
In Proposition~\ref{prop:reparametrizationHigherOrder}, all objects are assumed to be rational in their arguments. 
By essentially the same proof, it can be generalized as follows. 

Let $\mathbf{x}'=\mathbf{p}(u,\mathbf{x}), y=q(u,\mathbf{x})$ be a realization and let $\mathbf{s}(\mathbf{x}) \in \overline{\KK(\mathbf{x})}^n$ be such that $\mathbf{Q}(\mathbf{x}):=\mathbf{P}(\mathbf{s})$ is rational.
Then $\mathbf{z}$ in~\eqref{eq-param}, computed for $\mathbf{Q}$, is rational as well and~\eqref{eq-rerealization} gives a realization of $F$.
\end{remark}

As explained in~\cite{pavlov2022realizing}, when $n=1$ or the given IO-equation $F$ is independent of derivatives of $u$, the decision of whether $F$ is realizable is algorithmic. In this paper, we mainly study observable and real realizations of these types of IO-equations.

\section{Observable realizations}\label{sec-proper}
In this section, we investigate realizations such that the corresponding parametrization is proper. These realizations have the special properties that all other realizations can be found from them by means of reparametrizations; and the states $\mathbf{x}$ are ``observable'', an important property in control theory. The cases where the IO-equation is independent of $u$ or of first-order are special and can be treated algorithmically.

\begin{remark}\label{rem-observability}
For a realization $\mathbf{x}'=\mathbf{p}(u,\mathbf{x}), y=q(u,\mathbf{x})$ such that the corresponding parametrizations is proper, it holds that $$\overline{\KK}(u,\ldots,u^{(n)})(q,\LD_q(\mathbf{p}),\ldots,\LD_q^n(\mathbf{p})) = \overline{\KK}(u,\ldots,u^{(n)})(\mathbf{x}).$$
In control theory, a common question is whether the states $\mathbf{x}$ are \textit{(globally) observable}, that is, whether $\overline{\KK}(u^{(\infty)})(q,\LD_q(\mathbf{p}),\ldots) = \overline{\KK}(u^{(\infty)})(\mathbf{x})$~\cite[Proposition 3.4]{hong2020global}\footnote{Let us remark that  observability of a state $x_i(t)$, defined as in e.g.~\cite{birk1993computer}, is equivalent to identifiability of the initial value $x_i(0)$ treated as a parameter, which is in turn the subject of \cite[Proposition 3.4]{hong2020global}.}. 
By~\cite[Theorem 3.16]{hong2020global}, the properness of the corresponding parametrization is a necessary and sufficient condition for the observability of all states $\mathbf{x}$.
\end{remark}

\begin{remark}\label{rem-algobs}
    The algebraic characterization of global observability relies on \cite{hong2020global}. The set of inputs considered in that paper is a generic (Zariski-open) subset of the set of functions that are analytic in some neighborhood of the origin (see e.g. \cite[Notation 2.4 and Definition 2.5]{hong2020global}). This is the set of inputs for which (the algebraic) Definition \ref{def:observability} of observability makes sense.
\end{remark}

Remarks~\ref{rem-observability} and~\ref{rem-algobs} justify the following definition.

\begin{definition} \label{def:observability}
Let $F \in \KK[u^{(\infty)},y^{(\infty)}]$ be an irreducible differential polynomial of order $n$. Assume that $F$ is realizable. A realization $\mathbf{x}'=\mathbf{p}(u,\mathbf{x}), y=q(u,\mathbf{x})$ is called \textit{(globally) observable} for the set of inputs described in Remark~\ref{rem-algobs} if the corresponding parametrization $(q,\LD_{\mathbf{p}}(q),\ldots,\LD_{\mathbf{p}}^n(q))$ is a proper rational parametrization of $\Va(F)$.
Moreover, if there exists an observable realization of $F$, then we say that $F$ is \textit{observably realizable}.
\end{definition}

Let us note that properness of a given parametrization can always be checked by, for instance, using elimination techniques. For curves and surfaces there are degree-conditions that are easy to verify, see Remark~\ref{rem-necDegreeConditionsProper},~\cite[Theorem 5]{perez2005partial} and~\cite{perez2008univariate}.

\begin{problem}
    The problem treated in this section is that of \emph{observable rational realizability}. It is formulated as follows: Given $F \in \mathbb{K}[u^{(\infty)}, y^{(\infty)}]$ find an \emph{observable} (in the sense of Definition \ref{def:observability}) system $\Sigma$ as in \eqref{eq-realization} such that  $F=0$ is the IO-equation of $\Sigma$.
\end{problem}

\begin{lemma}\label{lem-fieldextesionproper}
Let $V$ be a rational variety over an algebraically closed field $\LL$ of characteristic zero. 
Let $\mathbf{P}(\mathbf{x})$ be a proper parametrization of $V$ and $\mathbf{Q}(\mathbf{x})$ be another parametrization of $V$, not necessarily proper. Let $\KK$ be the smallest subfield of $\LL$ containing the coefficients of $\mathbf{P}$, the coefficients of $\mathbf{Q}$ and the coefficients of a finite set of generators of $V$. Then, there exists a rational reparametrization $\mathbf{s}(\mathbf{x})$ with coefficients in $\KK$ such that $\mathbf{P}(\mathbf{s})=\mathbf{Q}(\mathbf{x})$.
\end{lemma}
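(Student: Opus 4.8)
The plan is to construct the reparametrization explicitly as $\bold{s} = P^{-1} \circ Q$, where $P^{-1}\colon V \dashrightarrow \mathbb{A}^n$ (with $n = \dim V$) is the rational inverse guaranteed by the properness of $P$, and then to show that this $\bold{s}$ is automatically defined over $\KK$. Since $Q$ is dominant onto $V$, the generic point of $\mathbb{A}^n$ maps under $Q$ into the dense open locus where $P^{-1}$ is defined, so $\bold{s} := P^{-1} \circ Q$ is a well-defined rational map $\mathbb{A}^n \dashrightarrow \mathbb{A}^n$ satisfying $P(\bold{s}) = Q$ over $\LL$. Applying the chain rule to $\mathcal{J}(Q) = \mathcal{J}(P)(\bold{s}) \cdot \mathcal{J}(\bold{s})$ and using that $Q$ has full-rank Jacobian forces $\mathcal{J}(\bold{s})$ to have full rank as well, so $\bold{s}$ is a genuine reparametrization. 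This settles existence over $\LL$; the whole content of the lemma is the field of definition.

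First I would record that $\bold{s}$ is the \emph{unique} rational map with $P(\bold{s}) = Q$. Since $P$ is proper it is birational, hence generically injective in characteristic zero, so there is a dense open $U \subseteq \mathbb{A}^n$ on which $P$ is injective. Any solution $\tilde{\bold{s}}$ of $P(\tilde{\bold{s}}) = Q$ is dominant, for otherwise $\im(Q) = P(\im(\tilde{\bold{s}}))$ would lie in a subvariety of $V$ of dimension $< n$, contradicting dominance of $Q$; consequently, for a generic point $\bold{a}$ both $\bold{s}(\bold{a})$ and $\tilde{\bold{s}}(\bold{a})$ lie in $U$ and satisfy $P(\bold{s}(\bold{a})) = Q(\bold{a}) = P(\tilde{\bold{s}}(\bold{a}))$, forcing $\bold{s}(\bold{a}) = \tilde{\bold{s}}(\bold{a})$ and hence $\bold{s} = \tilde{\bold{s}}$ as rational maps.

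The descent to $\KK$ then proceeds by a Galois-type argument exploiting this uniqueness. I would write each component $s_i = N_i/D_i$ in lowest terms with $D_i$ normalized to be monic in a fixed monomial order, so that the tuple of coefficients of $\bold{s}$ is canonically attached to the rational map. For any automorphism $\sigma$ of $\LL$ fixing $\KK$ pointwise, the hypothesis that $P$, $Q$ and a generating set of $V$ have coefficients in $\KK$ gives $P^\sigma = P$ and $Q^\sigma = Q$; applying $\sigma$ to $P(\bold{s}) = Q$ yields $P(\bold{s}^\sigma) = Q$, so by uniqueness $\bold{s}^\sigma = \bold{s}$. By uniqueness of the normalized lowest-terms representation, every coefficient of each $N_i,D_i$ is then fixed by $\sigma$. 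Since this holds for all $\sigma \in \mathrm{Aut}(\LL/\KK)$ and the fixed field of $\mathrm{Aut}(\LL/\KK)$ equals $\KK$ in characteristic zero — any element of $\LL \setminus \KK$ is either transcendental over $\KK$ or a nontrivial separable algebraic element with a distinct conjugate, and in both cases Steinitz's extension theorem produces an $\LL$-automorphism over $\KK$ moving it — all these coefficients lie in $\KK$, so $\bold{s} \in \KK(\bold{x})^n$.

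The main obstacle I anticipate is precisely this fixed-field step: one must verify that each $\KK$-isomorphism between finitely generated subfields of $\LL$ extends to a full automorphism of $\LL$, and that the chosen normalization genuinely upgrades the equality $\bold{s}^\sigma = \bold{s}$ of rational maps to an equality of coefficient vectors. If invoking automorphisms of a possibly enormous field $\LL$ is undesirable, an alternative is to first observe that the coefficients of $\bold{s} = P^{-1}\circ Q$ are algebraic over $\KK$, since $P^{-1}$ is obtained from $P$ and the ideal of $V$ (all defined over $\KK$) by elimination, a process introducing only algebraic quantities; this reduces to the case $\LL = \overline{\KK}$, where $\overline{\KK}/\KK$ is Galois and the descent is entirely standard. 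Either route reduces the lemma to the uniqueness of $\bold{s}$, which is exactly where properness of $P$ is indispensable.
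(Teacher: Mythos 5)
Your proof is correct, and the core construction is the same as the paper's: both take $\bold{s} = P^{-1}\circ Q$, where $P^{-1}$ exists by properness and the composition is well defined because $Q$ is dominant. Where you genuinely diverge is in establishing the field of definition, which, as you rightly say, is the whole content of the lemma. The paper argues directly that $P^{-1}$ itself has coefficients in $\KK$: inverting a proper parametrization is an elimination problem whose input ($P$ and the generators of $V$) is defined over $\KK$, and elimination does not extend the ground field; composing with $Q$ then immediately puts $\bold{s}$ in $\KK(\bold{x})^{\dim V}$. You instead prove that $\bold{s}$ is the \emph{unique} rational solution of $P(\bold{s})=Q$ (using generic injectivity of the birational $P$ and dominance of any solution) and then descend by Galois: every $\sigma\in\mathrm{Aut}(\LL/\KK)$ sends a solution to a solution, hence fixes $\bold{s}$, and the fixed field of $\mathrm{Aut}(\LL/\KK)$ is $\KK$ because $\LL$ is algebraically closed of characteristic zero (Steinitz extension). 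Your route is longer but more self-contained: it trades the appeal to the field-of-definition property of elimination for a uniqueness statement plus the fixed-field fact, and you correctly flag the two delicate points (extending automorphisms from finitely generated subfields to all of $\LL$, and upgrading $\bold{s}^\sigma=\bold{s}$ as rational maps to an equality of normalized coefficient vectors). The alternative you sketch at the end --- that the coefficients of $P^{-1}$ are produced by elimination over $\KK$ --- is essentially the paper's one-line argument, so you have in effect recovered both proofs.
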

\begin{proof}
Since $\mathbf{P}$ is proper, there exists $\mathbf{P}^{-1} \colon V \rightarrow \LL^{\dim(V)}$. Since $V$ is $\KK$-definable (i.e. is given by equations with coefficients in $\mathbb{K}$), $\mathbf{P}$ has coefficients in $\KK$, and using that elimination theory does not extend the ground field, we get that $\mathbf{P}^{-1}$ has coefficients in $\KK$. So, $\mathbf{s}=\mathbf{P}^{-1}(\mathbf{Q}(\mathbf{x}))$ has coefficients in $\KK$ and clearly satisfies $\mathbf{P}(\mathbf{s})=\mathbf{Q}(\mathbf{x})$.
\end{proof}

\begin{theorem}\label{thm:singletonrealizationbasis}
Let $F \in \KK[u^{(\infty)},y^{(\infty)}]$ be irreducible and of order $n$. 
Let $\Sigma=\{\mathbf{x}'=\mathbf{p}(u,\mathbf{x}), y=q(u,\mathbf{x})\}$ be an observable realization of $F$. 
Then every realization of $F$ is found by a reparametrization of $\Sigma$.
\end{theorem}
\begin{proof}
$F$ has coefficients in $\overline{\KK}(u)$ and $\mathbf{P}, \mathbf{Q}$ have coefficients in $\overline{\KK}(u,\ldots,u^{(n)})$; see e.g. comments after formula~\eqref{eq:parametrization}. Then, by applying Lemma~\ref{lem-fieldextesionproper}, there exists $\mathbf{s}\in\overline{\KK}(u,\ldots,u^{(n)})(\mathbf{x})$ such that $\mathbf{P}(\mathbf{s})=\mathbf{Q}(\mathbf{x})$. Now the claim follows from Proposition~\ref{prop:reparametrizationHigherOrder}.
\end{proof}

Theorem~\ref{thm:singletonrealizationbasis} motivates the study of observable realizations, because they generate all other realizations via reparametrizations, similarly to the case of proper parametrizations of algebraic varieties (see Lemma~\ref{lem-fieldextesionproper}). The following proposition deals with the properness of the output provided by Algorithm 1 in~\cite{pavlov2022realizing}.

\begin{proposition}\label{prop-properrealization3}
Let $F \in \KK[u,y^{(\infty)}]$ be irreducible and of order $n$. 
Assume that the parametrization computed in step (S8)a of~\cite[Algorithm 1]{pavlov2022realizing} is proper. Then the output parametrization, generated by ~\cite[Algorithm 1]{pavlov2022realizing}, is also proper and provides an observable realization of $F$.
\end{proposition}
\begin{proof}
We follow the proof of~\cite[Lemma 5.2]{pavlov2022realizing}. 
Assume that the produced parametrization $\mathbf{P}=(P_0,\ldots,P_n)$ is improper. This means that $\overline{\KK}(u)(\mathbf{P}) \subsetneq \overline{\KK}(u)(\mathbf{x})$. 
Therefore, there exists an automorphism $\sigma$ of $\overline{\KK(u,\mathbf{x})}/\KK$ such that $\sigma |_{\overline{\KK}(u,\mathbf{P})} = \text{id}$, and, $\sigma(x_j) \ne x_j$ for some $j \in \{1,\ldots,n\}$. 
For $i \in \{0,\ldots,n-1\}$, 
\begin{equation}\label{eq-sigma}
P_i(\sigma(x_1),\ldots,\sigma(x_n))=P_i(x_1,\ldots,x_n) \in \overline{\KK}(\mathbf{x}).
\end{equation}
Assume that $P_0,\ldots,P_{n-1}$ are algebraically dependent, i.e. there exists an irreducible $G \in \overline{\KK}[y_0,\ldots,y_{n-1}] \setminus \{0\}$ such that $G(P_0,\ldots,P_{n-1})=0$.
Since the rank of $\mathcal{J}(P_0,\ldots,P_{n-1})$ is $n-1$ (see comment after Lemma~\ref{lem:realizibility}), the image $(P_0,\ldots,P_{n-1})(\overline{\KK}^{n})$ is Zariski dense in $\overline{\KK}^{n}$. So, $G$ vanishes on a dense subset of $\overline{\KK}^{n}$ and therefore is constantly zero, a contradiction. 
Thus, $P_0,\ldots,P_{n-1}$ are algebraically independent. 
%\footnote{Otherwise, there is an irreducible $G \in \KK[y_0,\ldots,y_{n-1}] \setminus \{0\}$ such that $G(P_0,\ldots,P_{n-1})=0$. Let $H(y_0,\ldots,y_{n-1})=\Res_{y_n}(F(y_0,\ldots,y_n),G)$. Since $H(P_0,\ldots,P_{n-1})=0$ and $G$ is irreducible, $G \mid H$. Since $\deg(H) < \deg(G)$, $H=0$ follows. But then $G$ and $F$ have a common factor in contradiction to the irreducibility of $F$.}.
From~\eqref{eq-sigma} we see that $\sigma(x_1),\ldots,\sigma(x_n)$ are independent of $u$ and elements in $\overline{\KK(\mathbf{x})}$. 
Since $\sigma$ fixes $P_n$ and $u$, and $u$ is transcendental over $\overline{\KK(\mathbf{x})}$, $\sigma$ fixes the coefficients of $P_n$. 
Let $\mathbf{Q}=(P_0,\ldots,P_{n-1},Q_n) \in \KK(\mathbf{x})^{n+1}$ be the proper parametrization computed in step (S8)a. 
Its last component $Q_n$ is a $\QQ$-linear combination of the coefficients of $P_n$ and thus, $\sigma(Q_n)=Q_n$. 
This, however, contradicts to the properness of $\mathbf{Q}$.
\end{proof}

Proposition~\ref{prop-properrealization3} ensures the existence of an observable realization under the assumption that step (S8)a of~\cite[Algorithm 1]{pavlov2022realizing} provides a proper parametrization. If the hypersurface appearing in that step is of dimension larger than two, then it could happen that such a proper parametrization does not exist. 
Nevertheless, the cases $n=1$ and $n=2$ are special, because a proper parametrization of a unirational curve or surface, respectively, can always be found (see~\cite[Theorem 4.10]{sendra2008rational} and Castelnuovo's Theorem~\cite{castelnuovo1939,schicho1998rational}, respectively) leading to the following result.

\begin{theorem}\label{thm:properrealization2}
Let $F \in \KK[u,y,y']$ or $F \in \KK[u,y,y',y'']$ be irreducible and realizable. 
Then $F$ is observably realizable.
\end{theorem}
\begin{proof}
By Proposition~\ref{prop-properrealization3} and the fact that a surface over $\overline{\KK}$ is unirational if and only if there exists a proper parametrization, there exists a rational parametrization $\mathbf{P} \in \overline{\KK}(x_1,x_2)^2 \times \overline{\KK}(u)(x_1,x_2)$ if and only if there exists a proper one. 
Then the statement follows from Remark~\ref{rem:inverse}.
\end{proof}

For some realizable IO-equations, there might not exist an observable rational realization. This resembles the negative answer to the L\"uroth problem in classical algebraic geometry, see e.g.~\cite{artin1972some}.
Consider a unirational but not rational hypersurface defined by $F \in \QQ[y,y',y'',y^{(3)}]$, such as~\cite{segre1960variazione} $$F=y^4+y+(y')^4-6(y'')^2(y^{(3)})^2+(y'')^4+(y^{(3)})^4+(y^{(3)})^3.$$ 
Then the unirational parametrization $\mathbf{P} \in \overline{\QQ}(x_1,x_2,x_3)$ gives a realization as in~\eqref{eq-realization}; note that all components are independent from $u$, but there cannot be an observable realization.

So, we have given a complete answer to the cases where the IO-equation is of order at most two w.r.t. $y$ and order zero w.r.t. $u$. The question of whether every realizable IO-equation of order one, w.r.t. both $y$ and $u$, is observably realizable is answered positively in the next section. 
The case of second order IO-equations remains as an open problem.

\subsection{First order IO-equations}\label{sec-proper2}
We now study the case of first-order IO-equations where $F \in \KK[u,u',y,y']$ effectively depends on $u'$. 
If $F$ is independent of $u'$, we have shown in Theorem~\ref{thm:properrealization2} that if $F$ is realizable, then it is observably realizable. 
We generalize this result here.

Let us note that in the proof of~\cite[Proposition 5.5]{pavlov2022realizing} it is shown that every realization can be obtained by Lie-suitable reparametrizations $s \in \overline{\KK}(x)$ from the realizations produced by~\cite[Algorithm 2]{pavlov2022realizing}. 
As demonstrated in Example~\ref{ex-severalfactors}, not every possible output of this algorithm leads to an observable realization.

\begin{remark}\label{rem-necDegreeConditionsProper}
For deciding whether a given realization $x'=p(u,x), y=q(u,x)$ is observable, one can compute the corresponding parametrization $\mathbf{P}=(q,\LD_p(q))$ and then check whether the so-called tracing index, i.e. the cardinality of a generic fiber of the map induced by $\mathbf{P}$, is one~\cite[Theorem 4.30]{sendra2008rational}; or whether the degree conditions~\cite[Theorem 4.21]{sendra2008rational}
$$\deg_x(q) = \deg_{y'}(F), \, \deg_x(\LD_p(q)) = \deg_y(F)$$
are fulfilled for non-zero $q$. 
Note that $\deg_x(\LD_p(q)) \le \deg_x(q)+\deg_x(p)+1$.
\end{remark}

If the parametrization $\mathbf{P}$ derived from a given realization is improper, one may try to transform it into a proper one and thus get an observable realization. 
This is always possible as shown in what follows. We start with a technical lemma.

\begin{lemma}\label{lem-propernessprojection}
Let $\LL$ be an algebraically closed field, and let $\tilde{Q}:=(\tilde{Q}_1,\tilde{Q}_2,\tilde{Q}_3) \in \LL(x)^3$ be a proper parametrization of an algebraic space curve $\Cu$ over $\LL$ with $\tilde{Q}_1 \notin \LL$. Let $c$ be a transcendental element over $\LL$. We consider $\mathbf{Q}:=(\tilde{Q}_1, \tilde{Q}_2 + c\,\tilde{Q}_3)$ which is a parametrization of a plane curve $\Cu_1$ over $\overline{\LL(c)}$. Then $\mathbf{Q}$ is a proper parametrization.
\end{lemma}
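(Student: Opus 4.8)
The plan is to establish properness of $Q$ through its tracing index: a parametrization of a curve is proper exactly when a generic fiber of the induced map is a single point, equivalently when $\overline{\LL(c)}(x)=\overline{\LL(c)}(Q_1,Q_2)$ (the field characterization and the degree/fiber interpretation recalled in Remark~\ref{rem-necDegreeConditionsProper}). I would introduce a transcendental parameter $t$ over $\overline{\LL(c)}$ and describe the generic fiber of $Q$ as the set of all $x$ in the algebraic closure of $\overline{\LL(c)}(t)$ satisfying
\[
\tilde{Q}_1(x)=\tilde{Q}_1(t),\qquad \tilde{Q}_2(x)+c\,\tilde{Q}_3(x)=\tilde{Q}_2(t)+c\,\tilde{Q}_3(t).
\]
The entire goal is then to show that the only such $x$ is $t$ itself, so that the fiber is a singleton.

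The crucial first step, and the place where the hypothesis $\tilde{Q}_1\notin\LL$ is used, is to argue that every $x$ in this fiber is already algebraic over $\LL(t)$ and in particular independent of $c$. Since $\tilde{Q}_1$ has its coefficients in $\LL$ and is nonconstant, the relation $\tilde{Q}_1(x)=\tilde{Q}_1(t)$ is a nontrivial algebraic equation for $x$ over $\LL(t)$; hence any fiber element lies in $\overline{\LL(t)}$. Next I would use that $c$ is transcendental over $\overline{\LL(t)}$: it is transcendental over $\LL$ and algebraically independent from $t$, so by transitivity of algebraicity it remains transcendental over the algebraic extension $\overline{\LL(t)}$. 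Because $x\in\overline{\LL(t)}$, both $\tilde{Q}_2(x)-\tilde{Q}_2(t)$ and $\tilde{Q}_3(x)-\tilde{Q}_3(t)$ lie in $\overline{\LL(t)}$, and the second fiber equation becomes
\[
\bigl(\tilde{Q}_2(x)-\tilde{Q}_2(t)\bigr)+c\bigl(\tilde{Q}_3(x)-\tilde{Q}_3(t)\bigr)=0
\]
viewed as an element of $\overline{\LL(t)}[c]$. Transcendence of $c$ forces both coefficients to vanish, that is $\tilde{Q}_2(x)=\tilde{Q}_2(t)$ and $\tilde{Q}_3(x)=\tilde{Q}_3(t)$.

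Finally I would combine these two equalities with $\tilde{Q}_1(x)=\tilde{Q}_1(t)$ and invoke the properness of $\tilde{Q}$: since the generic fiber of $\tilde{Q}$ as a parametrization of the space curve $\Cu$ is a single point, the three coincidences force $x=t$. Therefore the generic fiber of $Q$ is exactly $\{t\}$, the tracing index of $Q$ equals one, and $Q$ is a proper parametrization of $\Cu_1$ over $\overline{\LL(c)}$.

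I expect the main obstacle to be the rigorous justification that fiber elements cannot depend on $c$ — the ``specialization into $\overline{\LL(t)}$'' followed by the coefficient-separation argument. This is precisely where $\tilde{Q}_1\notin\LL$ is indispensable: if $\tilde{Q}_1$ were constant, the first equation would impose no constraint, the fiber could contain genuinely $c$-dependent points, and the separation of the two coefficients in $\overline{\LL(t)}[c]$ would break down. Making this independence airtight (e.g. by noting that the finitely many roots of the nonconstant $\tilde{Q}_1(x)-\tilde{Q}_1(t)$ all lie in $\overline{\LL(t)}$) is the technical heart of the argument; the remaining steps are then formal.
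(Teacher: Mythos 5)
Your proof is correct, but it takes a genuinely different route from the paper. The paper works with the resultant-style polynomials $G_i(a,b)=\num(Q_i(a)-Q_i(b))$ and the characterization of the tracing index as $\deg\gcd(G_1,G_2)$: since $G_1$ comes from $\tilde{Q}_1$ alone it is $c$-free and nonzero, so $G:=\gcd(G_1,G_2)$ may be taken in $\LL[a,b]$; dividing $G_2$ while being independent of $c$ forces $G$ to divide both $\tilde{G}_2$ and $\tilde{G}_3$, hence the gcd $\tilde{G}$ of all three $\tilde{G}_i$, which has degree one by the space-curve properness criterion of P\'erez-D\'iaz. You instead argue directly on the generic fiber: the equation $\tilde{Q}_1(x)=\tilde{Q}_1(t)$ (nontrivial precisely because $\tilde{Q}_1\notin\LL$) traps every fiber element in $\overline{\LL(t)}$, over which $c$ remains transcendental, so the second fiber equation splits coefficientwise into $\tilde{Q}_2(x)=\tilde{Q}_2(t)$ and $\tilde{Q}_3(x)=\tilde{Q}_3(t)$, and properness of $\tilde{Q}$ forces $x=t$. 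The two arguments exploit the same two ideas --- the $c$-freeness coming from $\tilde{Q}_1$ and coefficient separation in $c$ --- but yours is more self-contained (it needs only the generic-fiber characterization of properness, essentially the definition of birationality, rather than the gcd-degree criterion for space curves), while the paper's gcd formulation meshes directly with the $G^P$ machinery it reuses later in Theorem~\ref{thm-properrealization}. The one point to make airtight in your version is that the fiber elements are not poles of $\tilde{Q}_2$ or $\tilde{Q}_3$, so that the two coefficients of the linear polynomial in $c$ genuinely lie in $\overline{\LL(t)}$; this holds for a generic fiber and is routine.
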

\begin{proof}
Define $G_i(a,b):=\num(Q_i(a)-Q_i(b))$ for $i \in \{1,2\}$. 
Note that since $\tilde{Q}_1$ is non-constant, $G_1$ is non-zero. 
Consider $G:=\gcd(G_1,G_2) \in \LL(c)[a,b]$. 
Let $\tilde{G}_i:=\num(\tilde{Q}_i(a)-\tilde{Q}_i(b))$ for $i \in \{1,2,3\}$ and $\tilde{G}:=\gcd(\tilde{G}_1,\tilde{G}_2,\tilde{G}_3)$. 
Since $G$ divides $\tilde{G}_1$, we can assume that $G$ is independent of $c$. 
Because $G \in \LL[a,b]$ divides $G_2$, it divides $\tilde{G}_2$ and $\tilde{G}_3$. 
Thus, $G$ divides $\tilde{G}$. 
By~\cite[Section 2]{perez2013partial}, the degree of $\tilde{G}$ is one. 
Thus, the degree of $G$ is one as well and $\mathbf{Q}$ is proper.
%\textcolor{red}{Alternative proof: Since $\tilde{Q}$ is proper and parametrizes an algebraic curve, there are only finitely many exceptional values. This holds also for the projection $Q$.}
\end{proof}

\begin{theorem}\label{thm:firstorderproper}
Let $F \in \KK[u,u',y,y']$ be irreducible and realizable. 
Then $F$ is observably realizable.
\end{theorem}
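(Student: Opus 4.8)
The plan is to start from an arbitrary realization $x'=p(u,x),\,y=q(u,x)$ of $F$ and to repair the possible improperness of its parametrization
$$P=(q,\LD_p(q))=\bigl(q,\;p\,\partial_x q+u'\,\partial_u q\bigr)$$
by a reparametrization that is \emph{independent of $u$}, so that Proposition~\ref{prop:reparametrizationHigherOrder} (or its algebraic extension, Remark~\ref{rem:nonrationalreparametrization}) still produces a realization of $F$. The bridge to Lemma~\ref{lem-propernessprojection} is the auxiliary space curve parametrized over $\overline{\KK}(u)$ by $\tilde{Q}:=(q,\,p\,\partial_x q,\,\partial_u q)$: indeed $P=(\tilde{Q}_1,\tilde{Q}_2+u'\tilde{Q}_3)$ is exactly the projection appearing in that lemma with the transcendental $c=u'$. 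Because $u'$ is transcendental over $\overline{\KK}(u,x)$, the relation $P(x_1)=P(x_2)$ splits into $\tilde{Q}(x_1)=\tilde{Q}(x_2)$, so the fibers of $P$ and of $\tilde{Q}$ over generic points coincide; hence $P$ is proper over $\overline{\KK}(u,u')$ if and only if $\tilde{Q}$ is proper over $\overline{\KK}(u)$, and this equivalence survives a common $u$-independent reparametrization. Thus it suffices to make $\tilde{Q}$ proper by a reparametrization in $\overline{\KK(x)}$, after which Lemma~\ref{lem-propernessprojection} delivers properness of the corresponding $P$ for free.

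The real obstacle is that a proper reparametrization of $\tilde{Q}$ produced by the usual space-curve algorithms lives over $\overline{\KK}(u)$ and a priori depends on $u$, which would destroy the Lie-suitability needed to return to a realization. I would overcome this using the differential structure carried by $q$. Write $\overline{\KK}(u)(\tilde{Q})=\overline{\KK}(u)(w)$ by L\"uroth and note $q,\partial_u q\in\overline{\KK}(u)(w)$. Writing $q=\phi(w)$ with $\phi$ over $\overline{\KK}(u)$ and differentiating gives $\partial_u q=(\partial_u\phi)(w)+\phi'(w)\,\partial_u w$; since $q$ is non-constant in $x$ we have $\phi'\neq 0$, forcing $\partial_u w\in\overline{\KK}(u)(w)$. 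Hence $\overline{\KK}(u)(w)$ is a differential subfield of $(\overline{\KK}(u)(x),\partial_u)$, whose field of constants is $\overline{\KK}(x)$.

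The descent step then finishes the argument. In the finite differential extension $\overline{\KK}(u)(x)/\overline{\KK}(u)(w)$ the constants of the top field are algebraic of bounded degree over the constants $C_0$ of the base, so $C_0$ has the same transcendence degree over $\overline{\KK}$ as $\overline{\KK}(x)$, namely one; thus $C_0=\overline{\KK}(w_0)$ for some $w_0\in\overline{\KK}(x)$ with $\partial_u w_0=0$. Comparing degrees of the towers over $\overline{\KK}(u)(x)$ yields $\overline{\KK}(u)(w)=\overline{\KK}(u)(w_0)$, so the reparametrizing function may be taken $u$-independent. Choosing $s\in\overline{\KK(x)}$ with $w_0\circ s=x$ (an algebraic branch of $w_0^{-1}$) turns $\tilde{Q}\circ s$ into a proper space-curve parametrization $\hat{Q}$ over $\overline{\KK}(u)$; as $q\circ s=\hat{Q}_1$ is non-constant, Lemma~\ref{lem-propernessprojection} with $c=u'$ shows $P\circ s=(\hat{Q}_1,\hat{Q}_2+u'\hat{Q}_3)$ is proper. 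Since $P\circ s$ is rational, Remark~\ref{rem:nonrationalreparametrization} identifies it as the parametrization of an honest realization of $F$, which is therefore observable. The two points I expect to verify with care are the fiber invariance under specializing $u'$ and, above all, the constant-field computation underlying the descent, since it is precisely what guarantees that the corrective reparametrization can be chosen Lie-suitable.
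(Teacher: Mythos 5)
Your proof is correct, but the way you resolve the central difficulty is genuinely different from the paper's. Both arguments set up the same auxiliary space curve parametrized by $(q,\,p\,\partial_x q,\,\partial_u q)$ and both finish with Lemma~\ref{lem-propernessprojection} applied with $c=u'$; the divergence is in how the proper reparametrization of that space curve is made compatible with the differential structure. The paper keeps the properizing reparametrization $s\in\overline{\KK}(u)(x)$ $u$-\emph{dependent}: it observes that $s$ satisfies the first-order ODE~\eqref{eq-SRGS} in $u$ with $x$ as a transcendental constant, invokes \cite[Theorem 5.2]{vo2018deciding} to conclude the equation is Riccati or linear, replaces $s$ by a strong rational general solution $r$ of degree one in the constant, and then verifies the compatibility condition $\partial_{u'}Q_2=\partial_u Q_1$ of Proposition~\ref{prop:uorderone} directly --- so the new parametrization corresponds to a realization even though $r$ depends on $u$. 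You instead force the reparametrization to be $u$-\emph{independent} by a differential L\"uroth descent: since $q$ and $\partial_u q$ lie in the intermediate field $\overline{\KK}(u)(w)$ and $\partial_x q\neq 0$, that field is closed under $\partial_u$, its constants form $\overline{\KK}(w_0)$ with $w_0\in\overline{\KK}(x)$, and the sandwich $[\overline{\KK}(u)(x):\overline{\KK}(u)(w)]\le[\overline{\KK}(u)(x):\overline{\KK}(u)(w_0)]=[\overline{\KK}(x):\overline{\KK}(w_0)]\le[\overline{\KK}(u)(x):\overline{\KK}(u)(w)]$ (the last inequality from linear disjointness of constants over the base) gives $\overline{\KK}(u)(w)=\overline{\KK}(u)(w_0)$; composing with an algebraic branch of $w_0^{-1}$ and citing Remark~\ref{rem:nonrationalreparametrization} then yields the observable realization. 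The two steps you flagged do go through: the fiber splitting uses only that $u'$ is transcendental over $\overline{\KK}(u)(x)$, and the descent needs exactly the two facts named above plus the invariance of $\deg(w_0)$ under the ground-field extension $\overline{\KK}\subseteq\overline{\KK}(u)$. What the paper's route buys is an explicit algorithmic handle (solve a Riccati or linear equation for a degree-one solution) and the identity~\eqref{eq-SRGS}, which is reused afterwards; what your route buys is independence from the theory of rational general solutions of first-order ODEs and a cleaner structural explanation of \emph{why} the reparametrization can be taken Lie-suitable, at the cost of the less effective step of computing the constant field $\overline{\KK}(w_0)$.
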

\begin{proof}
Let $x'=p(u,x), y=q(u,x)$ be a realization of $F$ with corresponding parametrization $\mathbf{P}=(q,\LD_p(q))$. 
If $\mathbf{P}$ is proper, the claim follows. 
So let us assume that $\mathbf{P}$ is improper. 
Let us write $F=\sum_{i=0}^d f_i(u,y,y')\,u'^i$ and define $g_i(u,z_0,z_1,z_2)$ as the $i$th coefficient of $\sum_{i=0}^d f_i(u,z_0,z_1+u'\,z_2)\,u'^i$ seen as polynomial in $u'$. 
Since $F(\mathbf{P})=0$, we obtain that $\tilde{P}:=(q,\partial_xq \cdot p, \partial_uq)$ fulfills $g_i(\tilde{P})=0$ for every $i \in \{0,\ldots,\ell\}$ and defines an irreducible curve on the variety $\Va(g_0,\ldots,g_\ell)$, denoted by $\Cu_g$. Note that since $q$ is non-constant, because $\mathbf{P}$ is a parametrization, then $\tilde{P}$ is also a parametrization and in particular non-constant. 
By a version of~\cite[Lemma 4.17]{sendra2008rational} for space-curves, there exists a proper parametrization $\tilde{Q} \in \overline{\KK}(u)(x)^3$ of $\Cu_g$ and a reparametrization $s \in \overline{\KK}(u)(x)$ such that $\tilde{Q}(s)=\tilde{P}(x)$. 
Note that no field extension is necessary for obtaining $s$ and $\tilde{Q}$~\cite[Remark 1]{perez2013partial}.
%\footnote{Evaluating $\tilde{P}$ with values in $\LL=\overline{\KK}(u)$ leads to a simple point in $\LL$. Since the implicit equations have coefficients in $\KK \subset \LL$, there exists a proper parametrization of the curve over $\LL$, say $\tilde{Q}$. Then $\tilde{Q}^{-1}$ is also over $\LL$ and hence $s=\tilde{Q}^{-1} \circ \tilde{P}$ is over $\LL$.}
Let us consider $$\frac{d}{du}\tilde{Q}_1(u,s) = \partial_u \tilde{Q}_1(u,s) + \partial_x \tilde{Q}_1(u,s) \cdot \partial_u s = \partial_u q(u,x).$$
Since $\tilde{Q}_3(u,s)=\partial_u q(u,x)$, it holds that 
\begin{equation}\label{eq-SRGS}
\partial_u s = \frac{\tilde{Q}_3(u,s)-\partial_u \tilde{Q}_1(u,s)}{\partial_x \tilde{Q}_1(u,s)} \in \overline{\KK}(u,s).
\end{equation}
Thus, the reparametrization $s$ is a so-called strong rational general solution (see~\cite{vo2018deciding}) of this first-order differential equation in $u$ (with transcendental constant $x$). 
By~\cite[Theorem 5.2]{vo2018deciding},~\eqref{eq-SRGS} is either a Riccati equation or linear. There exists a strong rational general solution $r$ of~\eqref{eq-SRGS} with $\deg_x(r)=1$ (see e.g.~\cite[Section A1.2, A1.3]{murphy2011ordinary}) and thus, $r$ is a M\"obius transformation (seen as an element in $\overline{\KK}(u)(x)$). 
Define $\mathbf{Q}:=(\tilde{Q}_1(r),\tilde{Q}_2(r)+u'\,\tilde{Q}_3(r)) \in \overline{\KK}(u,x) \times \overline{\KK}(u,u',x)$. 
By construction, $\mathbf{Q}=(Q_1,Q_2)$ is a parametrization of $\Cu(F)$. 
Moreover, $\partial_{u'}Q_2 = \tilde{Q}_3(r) = \partial_uQ_1$ follows from~\eqref{eq-SRGS} and, by Proposition~\ref{prop:uorderone}, $\mathbf{Q}$ corresponds to a realization which is, by Lemma~\ref{lem:realizibility}, given as
\begin{equation}\label{eq-properrealization}
x'=\frac{\tilde{Q}_2(r)+u'\,\tilde{Q}_3(r)-u'\,(\partial_u\tilde{Q}_1(r)+\partial_x\tilde{Q}_1(r) \cdot \partial_ur)}{\partial_x\tilde{Q}_1(r) \cdot \partial_xr} = \frac{\tilde{Q}_2(r)}{\partial_x\tilde{Q}_1(r) \cdot \partial_xr},\,
y=\tilde{Q}_1(r).
\end{equation}
Since $\tilde{Q}(r)$ is a proper parametrization, by Lemma~\ref{lem-propernessprojection}, also $\mathbf{Q}$ is proper.
\end{proof}

Let us note that, in the notation of Theorem~\ref{thm:firstorderproper}, as a consequence of Theorem~\ref{thm:singletonrealizationbasis}, the improper parametrization $\mathbf{P}=(q,\LD_p(q))$ is obtained as a Lie-suitable reparametrization of $\mathbf{Q}$. 
Finding the proper parametrization $\mathbf{Q}$, corresponding to a realization, can either be done by following the proof of Theorem~\ref{thm:firstorderproper} or by an adapted version of~\cite[Theorem 6.4]{sendra2008rational}. 
We choose the latter approach because otherwise we still have to use~\cite[Theorem 6.4]{sendra2008rational} for computing the proper parametrization of the space curve as one of the intermediate steps. 
%For this purpose, let us first state the following technical lemma.
%\begin{lemma}\label{lem-reparametrizationcomponentsfield}
%Let $P$ be an improper parametrization of a curve $\Cu$ over an algebraically closed field of characteristic zero $\LL$. Let $Q$ be a proper parametrization of $\Cu$ and let $s$ be a rational function such that $Q(s)=P$. Let $\KK_i$ be the smallest subfield of $\LL$ containing the coefficients of $P_i$ and $s$. Then $Q\in \KK_1(x) \times \KK_2(x)$.
%\end{lemma}
%\begin{proof}
%Let $\mathbf{P}=(P_1,P_2)$ and $Q=(Q_1,Q_2)$. We consider the defining polynomial $g_i(z_1,z_2)$ of the curve parametrized by $(P_i,r)$. Note that since the coefficients in $(P_i,r)$ are in $\KK_1$, then $g_i\in \KK_i[z_1,z_2]$. Moreover, $g_i$ is linear in $z_2$ and $Q_i$ is the root of $g_i$ in $z_2$ (see~\cite[Algorithm Proper-Reparametrization]{sendra2008rational}). Therefore, $Q\in \KK_1(x) \times \KK_2(x)$.\end{proof}
Define 
\[G_i^P(w,x)=\num(P_i)(w)\,\denom(P_i)(x)-\num(P_i)(x)\,\denom(P_i)(w) \in \overline{\KK}(u,u')[w,x]\] and set \[G^P(w,x)=\gcd(G_1^P,G_2^P) \in \overline{\KK}(u,u')[w,x].\]

\begin{theorem}\label{thm-properrealization}
Let $F \in \KK[u,u',y,y']$ be irreducible with a realization $x'=p(u,x), y=q(u,x)$ and corresponding parametrization $\mathbf{P}=(q,\LD_p(q))$. 
Let
\begin{equation}\label{eq-R}
R:=\left\{\frac{a\,G^P(\alpha,x)+b\,G^P(\beta,x)}{c\,G^P(\alpha,x)+d\,G^P(\beta,x)} \mid \alpha,\beta,a,b,c,d\in\overline{\KK(u)}, G^P(\alpha,\beta) \ne 0, ad-bc \ne 0 \right\} .
\end{equation}
Then $R \cap \overline{\KK}(x) \ne \emptyset$ and there exists an observable realization of $F$ with corresponding parametrization $\mathbf{Q}=(g,\LD_f(g))$ such that $\mathbf{Q}(s)=\mathbf{P}(x)$.
\end{theorem}
\begin{proof}
Let us write $\LL_1=\overline{\KK}(u)$, $\LL_2=\overline{\KK}(u,u')$. 
Let $\mathbf{P}=(q,\LD_p(q)) \in \LL_1(x) \times \LL_2(x)$ be expressed in reduced form, i.e. with coprime numerator and denominator in both components. 
Note that, for $i \in \{1,2\}$, $G_i^P\in \LL_i[w,x]$. Thus, $G^P \in \LL_1[w,x]$ and it is sufficient to work over $\LL_1$. 
By~\cite[Theorem 6.4]{sendra2008rational}, for every $r \in R \subset \overline{\LL_1}(x)$ there exists a proper parametrization $\mathbf{Q} \in \overline{\LL_2}(x)^2$, so that $\mathbf{Q}(r)=\mathbf{P}(x)$. 
From Theorem~\ref{thm:firstorderproper} we know that $R \cap \overline{\KK}(x) \ne \emptyset$. 
Choose $r \in R \cap \overline{\KK}(x)$. 
%Let $s=r^{-1} \in \overline{\KK}(x)$ be a compositional inverse such that $\mathbf{Q}(x)=\mathbf{P}(s)$. 
The defining polynomials $g_i(z_1,z_2)$ of $(P_i,r) \in \LL_i(x)^2$ have coefficients in $\LL_i$. 
Since $g_i$ is linear in $z_2$, and $Q_i$ is a root of $g_i$ in $z_2$ (see~\cite[Algorithm Proper-Reparametrization]{sendra2008rational}), it holds that $\mathbf{Q} \in \LL_1(x) \times \LL_2(x)$. 
By Remark~\ref{rem:nonrationalreparametrization}, it holds that $\mathbf{Q}$ defines a realization of $F$. 
The realization is indeed rational because the corresponding parametrization $\mathbf{Q}$ is rational.
\end{proof}

For computing $R \cap \overline{\KK}(x)$ in Theorem~\ref{thm-properrealization}, one can use the following method.\\
Let $r \in R \cap \overline{\KK}(x)$ be expressed as $r=M(x)/N(x)$ for $M,N\in \overline{\KK}[x]$ and $\gcd(M,N)=1$. Since $r \in R$, 
\begin{equation}\label{eq-equality}
M(x) \, (c\,G^P(\alpha,x)+d\,G^P(\beta,x))= N(x) \,(a\,G^P(\alpha,x)+b\,G^P(\beta,x)).
\end{equation}
All rational functions in $R$ have the same degree w.r.t. $x$ (see e.g. Theorem 6.3. in~\cite{sendra2008rational}). That is
\[\ell:=\max\{\deg_x(a\,G^P(\alpha,x)+b\,G^P(\beta,x)),\deg_x(c\,G^P(\alpha,x)+d\,G^P(\beta,x))\}=\max\{\deg_x(M),\deg_x(N)\}.\]
Let us say w.l.o.g. that $\deg_x(M)=\max\{\deg_x(M),\deg_x(N)\}$. Now, since $\gcd(M,N)=1$, by~\eqref{eq-equality}, one gets that $M$ divides $(a\,G^P(\alpha,x)+b\,G^P(\beta,x))$. Then, since $\deg_x(a\,G^P(\alpha,x)+b\,G^P(\beta,x))\leq \deg_x(M)$, we have that 
\[ (a\,G^P(\alpha,x)+b\,G^P(\beta,x))=\lambda M(x) \,\,\text{with } \lambda\in \overline{\KK(u)}. \]
After substituting this in the right hand side of~\eqref{eq-equality} and dividing by $M$, we obtain $(c\,G^P(\alpha,x)+d\,G^P(\beta,x))=\lambda N(x)$. 
In this situation, let $\mathbf{m}:=(m_0,\ldots,m_\ell),\mathbf{n}:=(n_0,\ldots,n_\ell)$ be tuples of new variables. We consider the polynomials 
\[ \begin{array}{l} E_1:= A\,G^P(A_1,x)+B\,G^P(B_1,x)-\lambda \sum_{i=0}^{\ell} m_i x^i \in \overline{\KK}[u,\mathbf{m},\lambda,A,B,A_1,B_1,x], \\
E_2:= C\,G^P(A_1,x)+D\,G^P(B_1,x)-\lambda \sum_{i=0}^{\ell} n_i x^i \in  \overline{\KK}[u,\mathbf{n},\lambda, C,D,A_1,B_1,x], \\
E_3:=Z_1\cdot G^P(A_1,B_1) (AB-CD)-1 \in \overline{\KK}[Z_1,u,A,B,C,D,A_1,A_2], \\
E_4:=Z_2 n_\ell+Z_3 m_\ell -1 \in \overline{\KK}[Z_2,Z_3,m_{\ell},n_{\ell}],
\end{array}\]
where $\lambda, Z_k,A,B,C,D,A_1,B_1$ are new variables. Now, let $\mathcal{V}^*$ be the set containing all non-zero coefficients of $E_1,E_2$ w.r.t. $x$ and let $\mathcal{V}_u:=\mathcal{V}^*\cup \{E_3,E_4\}$.
Now eliminate $u$ in $\mathcal{V}_u$ to obtain an ideal $\mathcal{V}$ with $\Va(\mathcal{V}) \subset \overline{\KK}^{2\ell+12}$.
By construction, $R \cap \overline{\KK}(x) \neq \emptyset$ if and only if $\Va(\mathcal{V}) \neq \emptyset$. 
Moreover, a zero of $\mathcal{V}$ defines an element $r \in R \cap \overline{\KK}(x)$.

Finally, we want to explicitly compute the proper realization. 
Given the improper parametrization $\mathbf{P}(x)$ and $r \in R \cap \overline{\KK}(x)$ as above, this can be done by
\begin{enumerate}
    \item making an ansatz for $\mathbf{Q}(x)$ of degree $\deg_x(\mathbf{Q})=\deg_x(\mathbf{P})/\deg_x(r)$ and $\deg_u(\mathbf{Q})=\deg_u(\mathbf{P})$ with undetermined coefficients, and solving the resulting linear system; or,
    \item by computing the implicit equations of $(P_i,r)$, $i \in \{1,2\}$, which are of the form $g_i(w,x) = \denom(Q_i)(x) - w\,\num(Q_i)(x)$ (see~\cite[Algorithm Proper-Reparametrization]{sendra2008rational}).
\end{enumerate}

\begin{algorithm}[H]
\caption{ObservableRealization}
\label{alg-properrealization}
\begin{algorithmic}[1]
    \REQUIRE An irreducible polynomial $F \in \KK[u,u',y,y']$ over a computable field $\KK$.
    \ENSURE An observable realization of $F$ if it exists.
    \STATE Check whether $F$ is realizable (e.g. by~\cite[Algorithm 2]{pavlov2022realizing}).
    \STATE In the affirmative case, let $x'=p(x,u), y=q(x,u)$ be any realization of $F$.
    \STATE Compute $R$ corresponding to the parametrization $\mathbf{P}=(q,\LD_p(q))$ as in~\eqref{eq-R}.
    \STATE Compute the intersection $R \cap \overline{\KK}(x)$ as described above and choose $r \in R \cap \overline{\KK}(x)$. 
    \STATE Compute $\mathbf{Q}(x)$ with $\mathbf{Q}(r)=\mathbf{P}(x)$ as in~\cite[Algorithm Proper-Reparametrization]{sendra2008rational}.
    \STATE Output the observable realization $x'=f(u,x), y=g(u,x)$ corresponding to $\mathbf{Q}$.
\end{algorithmic}
\end{algorithm}

\begin{theorem}\label{thm-Alg1correct}
Algorithm~\ref{alg-properrealization} is correct.
\end{theorem}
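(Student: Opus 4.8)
The plan is to verify that Algorithm~\ref{alg-properrealization} terminates on every admissible input and that its output meets the specification, namely that it returns an observable realization exactly when $F$ is realizable. I would split the argument along the branch taken in Step~1. The realizability test in Step~1 is correct and terminating by~\cite[Algorithm 2]{pavlov2022realizing}; if $F$ is not realizable, the algorithm halts with no output, which is correct, since then no realization, and a fortiori no observable one, exists. It thus remains to treat the affirmative branch.

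On the affirmative branch, Step~2 produces an actual realization $x'=p(u,x),\,y=q(u,x)$ (one is returned by~\cite[Algorithm 2]{pavlov2022realizing}), and Step~3 forms the associated parametrization $P=(q,\LD_p(q))$ together with the symbolic set $R$ and the polynomial $G^P$. The key existence input is Theorem~\ref{thm:firstorderproper}, which guarantees that a realizable first-order $F$ is observably realizable; together with Theorem~\ref{thm-properrealization} this gives $R\cap\overline{\KK}(x)\neq\emptyset$. I would then argue that Step~4 genuinely finds such an $r$: the elimination procedure described before the algorithm assembles $E_1,\dots,E_4$ and eliminates $u$ to obtain $\mathcal{V}$, and by its construction $R\cap\overline{\KK}(x)\neq\emptyset$ if and only if $\Va(\mathcal{V})\neq\emptyset$, with every point of $\Va(\mathcal{V})$ encoding an admissible $r\in R\cap\overline{\KK}(x)$. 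Hence nonemptiness is decided by a single elimination computation and a witness $r$ is extracted.

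For Step~5 I would invoke Theorem~\ref{thm-properrealization} together with~\cite[Theorem 6.4]{sendra2008rational}: for the chosen $r\in R$ there is a proper parametrization $Q$ with $Q(r)=P(x)$, and, as in the proof of Theorem~\ref{thm-properrealization}, the defining polynomials $g_i$ of $(P_i,r)$ are linear in the fibre variable, so $Q$ has components in $\overline{\KK}(u)(x)$ and $\overline{\KK}(u,u')(x)$ and can be computed via~\cite[Algorithm Proper-Reparametrization]{sendra2008rational}. For Step~6, since $Q(r)=P(x)$ with $r\in\overline{\KK}(x)$ a Lie-suitable reparametrization, Remark~\ref{rem:nonrationalreparametrization} shows that $Q$ defines a rational realization of $F$, read off through~\eqref{eq-param} in Lemma~\ref{lem:realizibility}; because $Q$ is proper, this realization is observable by definition. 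Termination is immediate, as each step is a finite symbolic computation: the realizability test, the gcd defining $G^P$, one elimination for $R\cap\overline{\KK}(x)$, and one proper reparametrization.

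The substantive mathematical content is carried by the earlier results, so the real work in the correctness proof is the bookkeeping that glues them together. I expect the main obstacle to be justifying the equivalence $R\cap\overline{\KK}(x)\neq\emptyset \iff \Va(\mathcal{V})\neq\emptyset$ and checking that a zero of $\mathcal{V}$ yields an $r$ defined over $\overline{\KK}$ for which the degree relation $\deg_x(Q)=\deg_x(P)/\deg_x(r)$ makes the proper-reparametrization step applicable; everything downstream then follows formally from Theorems~\ref{thm:firstorderproper} and~\ref{thm-properrealization}.
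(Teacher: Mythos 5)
Your proposal is correct and follows essentially the same route as the paper's proof: correctness of Step~1--2 from~\cite[Algorithm 2]{pavlov2022realizing}, existence of $r\in R\cap\overline{\KK}(x)$ and computability of $Q$ from Theorem~\ref{thm-properrealization} (which rests on Theorem~\ref{thm:firstorderproper}), the output being a genuine rational realization via Remark~\ref{rem:nonrationalreparametrization}, and termination step by step. The paper's argument is simply a terser version of your bookkeeping.
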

\begin{proof}
Correctness of the algorithm follows from Theorem~\ref{thm-properrealization} together with the correctness of~\cite[Algorithm 2]{pavlov2022realizing}. 
By Remark~\ref{rem:nonrationalreparametrization}, the output is indeed a realization; note that by construction in Theorem~\ref{thm-properrealization}, the right hand sides are indeed rational. 
The termination follows by the termination of each step.
\end{proof}

%\begin{remark}\label{rem-functionalInverse}
%In principle, for $r \in R \cap \overline{\KK}(x)$, we could choose a compositional inverse $s:=r^{-1} \in \overline{\KK(x)}$. 
%Then, by Remark~\ref{rem:nonrationalreparametrization}, we obtain the proper realization as $x'=\tfrac{p(u,s)}{\partial_x s}, y=q(u,s)$, which is indeed rational. 
%If $\deg_x(r) \ge 5$, however, it is in general impossible to express $s$ in closed form and instead we have to compute $Q$ indirectly as in Algorithm~\ref{alg-properrealization}.
%\end{remark}

\begin{corollary}\label{rem:oneoutputproper}
Let $F \in \KK[u,u,y,y']$ be irreducible and realizable. 
Among the finitely many possible outputs of~\cite[Algorithm 2]{pavlov2022realizing} there exists an observable realization.
\end{corollary}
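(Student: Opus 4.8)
The plan is to prove Corollary~\ref{rem:oneoutputproper} by connecting the existence result of Theorem~\ref{thm:firstorderproper} with the output structure of~\cite[Algorithm 2]{pavlov2022realizing}. Since $F$ is realizable, Theorem~\ref{thm:firstorderproper} guarantees that $F$ is \emph{observably} realizable, so there exists at least one observable realization $x'=f(u,x),\,y=g(u,x)$ whose corresponding parametrization $Q=(g,\LD_f(g))$ is proper. My goal is to show that one of the finitely many realizations produced by~\cite[Algorithm 2]{pavlov2022realizing} is precisely of this observable type (up to a harmless reparametrization).

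First I would recall the result referenced in the text just before Remark~\ref{rem-necDegreeConditionsProper}: in the proof of~\cite[Proposition 5.5]{pavlov2022realizing} it is shown that \emph{every} realization of $F$ is obtained from one of the algorithm's outputs by a Lie-suitable reparametrization $s\in\overline{\KK}(x)$. Applying this to the observable realization guaranteed by Theorem~\ref{thm:firstorderproper}, there is an algorithm output, say with parametrization $P$, and a Lie-suitable $s$ such that $P(s)=Q(x)$. Now I invoke the key observability fact from Remark~\ref{rem-observability}/Proposition~\ref{prop:singletonrealizationbasis}: a Lie-suitable reparametrization is a birational map on $\overline{\KK}(x)$ exactly when $s$ has degree one (cf.\ the M\"obius-transformation discussion in Section~2). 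Because $Q$ is proper and $P(s)=Q$, the tracing index of $P$ divides that of $Q$; since $Q$ is proper (tracing index one) and $s$ is itself a reparametrization, $P$ must also be proper, i.e.\ $s$ is a birational (degree-one) transformation. Hence the algorithm output $P$ is \emph{already} proper, giving an observable realization directly among the outputs.

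The cleaner way to phrase the central step is to argue by fields. The finitely many outputs of~\cite[Algorithm 2]{pavlov2022realizing} are all realizations, and by Proposition~\ref{prop:singletonrealizationbasis} the observable realization $Q$ generates all realizations via reparametrizations; conversely each output is obtained from $Q$ via some Lie-suitable $s$. The observable realization $Q$ satisfies $\overline{\KK}(u,u')(Q)=\overline{\KK}(u,u')(x)$, whereas a generic output $P$ with $P=Q(s^{-1})$ satisfies $\overline{\KK}(u,u')(P)=\overline{\KK}(u,u')(s^{-1}(x))$, which equals the full field iff $s$ is birational. Thus I need only exhibit that at least one output corresponds to a degree-one $s$, which is exactly the observable realization whose existence Theorem~\ref{thm:firstorderproper} supplies, since that observable realization is itself (by the completeness of the algorithm) reached as one of the outputs up to a degree-one reparametrization, and a degree-one reparametrization of a proper parametrization is again proper.

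The main obstacle will be the bookkeeping needed to assert that the observable realization from Theorem~\ref{thm:firstorderproper} is literally among the algorithm's outputs rather than merely a reparametrization of one: I must ensure that the reparametrization connecting it to an output is degree one, for only then is the output itself observable. This hinges on the fact that properness is preserved and reflected under degree-one reparametrizations but destroyed under higher-degree ones, together with the finiteness and completeness of the output set from~\cite[Proposition 5.5]{pavlov2022realizing}. Once that is pinned down, the corollary follows, and I would close the proof by remarking that checking properness of each output (via the tracing index or the degree conditions in Remark~\ref{rem-necDegreeConditionsProper}) then isolates the observable one algorithmically.
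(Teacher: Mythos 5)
Your proposal is correct and follows essentially the same route as the paper: it combines the completeness of the outputs of \cite[Algorithm 2]{pavlov2022realizing} (every realization is a Lie-suitable reparametrization of some output) with Theorem~\ref{thm:firstorderproper} and the fact that properness of $P(s)$ forces properness of $P$ (equivalently, as the paper phrases it contrapositively, reparametrizing a non-observable realization never yields an observable one). The only cosmetic difference is that you argue directly via the multiplicativity of the tracing index while the paper argues by contradiction; the second, ``field-theoretic'' rephrasing in your write-up briefly presupposes the invertibility of $s$ that you are trying to establish, but your first argument already closes the proof without it.
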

\begin{proof}
As mentioned in the beginning of the current section, every realization of $F$ can be found by a Lie-suitable reparametrization from the outputs of~\cite[Algorithm 2]{pavlov2022realizing} when every pair of factors occurring in steps (S2) and (S2)b is checked. 
Assume that in the outputs there exists no observable realization. 
Note that reparamerization of a non-observable realization again leads to a non-observable realization. 
By Theorem~\ref{thm:firstorderproper}, however, there exists an observable realization.
\end{proof}

\begin{example}\label{ex-severalfactors}
Let us consider the differential polynomial
\begin{align*}
F = \, & 27u^6y^3 - 27u^5y^2y' + 27u^4u'y^3 + 9u^4yy'^2 - 18u^3u'y^2y' + 9u^2u'^2y^3 - 4u^4y^2 - u^3y'^3 \\ & + 3u^2u'yy'^2 - 3uu'^2y^2y' + u'^3y^3 + 4u^3yy' - 4u^2u'y^2 - u^2y'^2 + 4uu'yy' - u'y'^2.
\end{align*}
By applying~\cite[Algorithm 2]{pavlov2022realizing} and using $y_0 = c^3u + c^2$, we obtain the two factors
\begin{align*}
N_1 = 9c^6u^5 + 21c^5u^4 - 6bc^3u^3 + 13c^4u^3 - 7bc^2u^2 - c^3u^2 + b^2u - 2c^2u + b
\end{align*}
and $N_2 = -3c^3u^2 - 2c^2u + b$. For the first factor $N_1$, any rational parametrization will lead to a non-observable realization. 
%\begin{align*}
%(b(x),c(x)) = & \left( \frac{-u^5 + 3u^4x - u^3x^2 - 3u^2x^3 + 2ux^4}{u^6 - 3u^5x + 6u^4x^2 - 7u^3x^3 + 6u^2x^4 - 3ux^5 + x^6}, \frac{-x}{u^2 - ux + x^2} \right)
%\end{align*}
For $N_2$ and for example $(b(x),c(x)) = (3u^2x^3 + 2ux^2, x)$, however, we find the observable realization $x' = ux, \, y= ux^3 + x^2$.
\end{example}

\begin{example}\label{ex-nonProperRealization}
Let us consider the non-observable realization
$$x'=\frac{1-x}{2u}, \, y=\frac{(1-x)^4}{u^2+(1-x)^6}$$
of some irreducible differential polynomial $F \in \RR[u,u',y,y']$. 
Note that $F$ does not have to be computed, but can be found by implicitizing the parametrization $\mathbf{P}$ corresponding to the given realization. 
$R$ corresponding to $\mathbf{P}$ is given by $G^P(w,x)=(w - 2 + x)(w - x)$. 
We can choose $r:=-x^2+2x \in R \cap \CC(x)$. 
The implicit equations is $(P_1,r)$
$$g_1(z_1,z_2)=z_2^3 - u^2 - 3z_2^2 + 3z_2 - 1 - (-z_2^2 + 2z_2 - 1)\,z_1 = \denom(Q_1)- \num(Q_1) \cdot z_1.$$
Similarly $g_2$ can be found such that $\mathbf{Q}=\left(\tfrac{-z_2^2 + 2z_2 - 1}{z_2^3 - u^2 - 3z_2^2 + 3z_2 - 1},-\tfrac{(-1 + x)^2(2u^2u' + x^3 + 2u^2 - 3x^2 + 3x - 1)}{u(-x^3 + u^2 + 3x^2 - 3x + 1)^2}\right)$ leads to the realization 
$$x'= \frac{1-x}{u},\,y=\frac{(1-x)^2}{u^2 + (1-x)^3}.$$
Let us note that since the degree of $s$ is small, we can choose $r=s^{-1}=1+\sqrt{1-x}$ to find the same observable realization (cf. Remark~\ref{rem:nonrationalreparametrization}).
\end{example}

\section{Real Realizations}\label{sec-real}
In this section, we focus in the analysis of realizations where all coefficients are real. More precisely, a realization of the form~\eqref{eq-realization} is called real if the right hand sides $\mathbf{p},q$ are rational functions with real coefficients in the indeterminates $x_1,\ldots,x_n,u$. 
By the observation after Lemma~\ref{lem:realizibility}, it is necessary and sufficient to have $P_0 \in \RR(u)(\mathbf{x})$ and solve~\eqref{eq-param} for $\mathbf{z} \in \RR(u,\mathbf{x})^n$.

\begin{problem}
    In this section we treat the \emph{real rational realization} problem. It is formulated as follows. Let $F\in \mathbb{K}[u^{\infty}, y^{(\infty)}]$. Find a system $\Sigma$ as in \eqref{eq-realization} defined over $\mathbb{R}$ such that $F=0$ is its IO-equation. That is, find a real realization of $F$.
\end{problem}

For a given realization~\eqref{eq-realization} which is real, the corresponding parametrization $$\mathbf{P}=(q,\LD_{\mathbf{p}}(q),\ldots,\LD^n_{\mathbf{p}}(q))$$ is also real. 
So, implicitizing, one gets that the associated hypersurface is $\RR$-definable. Thus, it is sufficient to study real irreducible varieties $\Va(F)$, i.e. the irreducible $\RR$-definable ones that contain a dense set of real points and admit a (uni-)rational parametrization $\mathbf{P} \in \RR(u^{(\infty)})(\mathbf{x})^{n+1}$.

\begin{theorem}\label{thm:realrealization}
Let $F \in \RR[u^{(\infty)},y^{(\infty)}]$ be an irreducible differential polynomial of order $n$. 
There exists a real realization of $F$ if and only if $\Va(F)$ admits a real parametrization $\mathbf{P} \in \RR(u^{(\infty)})(\mathbf{x})^{n+1}$ such that $P_0 \in \RR(u)(\mathbf{x})$ and~\eqref{eq-param} is independent of derivatives of $u$.
\end{theorem}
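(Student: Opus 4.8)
The plan is to prove both directions of the equivalence, and the key observation is that the forward direction is essentially already recorded in the preliminaries, while the converse requires checking that the reality of the data is preserved by the construction in Lemma~\ref{lem:realizibility}. First I would address the necessity (``only if''). Given a real realization $\bold{x}'=\bold{p}(u,\bold{x}),\,y=q(u,\bold{x})$, I would invoke the construction of the corresponding parametrization~\eqref{eq:parametrization}, namely $P=(q,\LD_{\bold{p}}(q),\ldots,\LD_{\bold{p}}^n(q))$. Since $\bold{p}$ and $q$ are real and the Lie-derivative $\LD_{\bold{p}}$ and the operator $D_u$ involve only $\RR$-rational operations applied to real data, every component $P_i$ is real, so $P\in\RR(u^{(\infty)})(\bold{x})^{n+1}$; this is exactly the remark recorded after formula~\eqref{eq:parametrization}. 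The component $P_0=q$ depends on $u$ but not its derivatives, so $P_0\in\RR(u)(\bold{x})$. Finally, because $P$ arises from an honest realization, Lemma~\ref{lem:realizibility} (via Remark~\ref{rem:inverse}) guarantees that the expression~\eqref{eq-param} recovers $\bold{z}=\bold{p}$, which by hypothesis lies in $\RR(u,\bold{x})^n$ and is in particular independent of the derivatives $u',u'',\dots$.

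For the sufficiency (``if''), I would start from a real parametrization $P\in\RR(u^{(\infty)})(\bold{x})^{n+1}$ with $P_0\in\RR(u)(\bold{x})$ such that~\eqref{eq-param} is independent of derivatives of $u$. The hypothesis that $P$ is a genuine rational parametrization (maximal-rank Jacobian) means, by the comment after Lemma~\ref{lem:realizibility}, that $P_0,\ldots,P_{n-1}$ are algebraically independent and hence $\mathcal{J}(P_0,\ldots,P_{n-1})$ is nonsingular, so~\eqref{eq-param} is well-defined. The crucial point is to verify that $\bold{z}$ in~\eqref{eq-param} is real: every ingredient entering~\eqref{eq-param}---the Jacobian $\mathcal{J}(P_0,\ldots,P_{n-1})$, its inverse, the components $P_i$, and the operator $D_u$---involves only field operations over $\RR(u^{(\infty)})$ applied to the real data $P$, so $\bold{z}\in\RR(u^{(\infty)})(\bold{x})^n$. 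By the standing assumption, $\bold{z}$ is additionally independent of derivatives of $u$, hence $\bold{z}\in\RR(u,\bold{x})^n$. Then Lemma~\ref{lem:realizibility} yields that $\bold{x}'=\bold{z},\,y=P_0$ is a realization of $F$, and since $\bold{z}$ and $P_0$ are both real, this realization is real in the sense of~\eqref{eq-realization}.

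The main obstacle---if there is one---is bookkeeping the field of definition at each stage rather than any deep idea: one must be careful that inverting the Jacobian and applying $D_u$ do not silently introduce coefficients outside $\RR(u^{(\infty)})$. Inverting a matrix over a field keeps entries in that field (Cramer's rule), and $D_u=\sum_{j\ge0}u^{(j+1)}\partial_{u^{(j)}}$ maps $\RR(u^{(\infty)})(\bold{x})$ into itself, so no extension occurs; this is the reality-preservation statement I would emphasize. The only genuinely nontrivial input is Lemma~\ref{lem:realizibility} itself, which translates the algebraic condition~\eqref{eq-param} into the existence of a realization and back, and which I am entitled to assume. In effect, the theorem is the ``real-coefficient shadow'' of Lemma~\ref{lem:realizibility}: the equivalence of Lemma~\ref{lem:realizibility} holds verbatim with $\overline{\KK}$ replaced by $\RR$ precisely because the two maps relating realizations and parametrizations are defined by rational operations that neither enlarge nor shrink the real ground field, so the whole argument reduces to tracking that $\RR$ is closed under the operations used.
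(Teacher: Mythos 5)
Your proposal is correct and takes essentially the same route as the paper: the forward direction uses that the corresponding parametrization~\eqref{eq:parametrization} of a real realization is real with~\eqref{eq-param} recovering $\bold{p}$, and the converse applies Lemma~\ref{lem:realizibility} together with the observation that all operations in~\eqref{eq-param} (Jacobian inversion, $D_u$) preserve coefficients in $\RR(u^{(\infty)})$. Your write-up is merely more explicit than the paper's about why the Jacobian is invertible and why no field extension occurs.
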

\begin{proof}
If $\mathbf{p},q$ defines a real realization of $F$, then $(q,\LD_{\mathbf{p}}(q),\ldots,\LD^n_{\mathbf{p}}(q))$ is a real parametrization of $\Va(F)$ and, by Lemma~\ref{lem:realizibility},~\eqref{eq-param} is independent of derivatives of $u$.

Now let $\mathbf{P} \in \RR(u)(\mathbf{x}) \times \RR(u^{(\infty)})(\mathbf{x})^{n}$ be a parametrization of $\Va(F)$ and let $\mathbf{z}$ be as in~\eqref{eq-param}. 
By Lemma~\ref{lem:realizibility}, $\mathbf{x}'=\mathbf{z},y=P_0$ is a realization of $F$. Moreover, since $\mathbf{z}$ is obtained as the product of real matrices, the realization is real.
\end{proof}

Note that Theorem~\ref{thm:realrealization} holds for every subfield of $\overline{\CC(u^{(\infty)})}$ in an analogous way. 
In the following, we present the theory taking $\RR(u^{(\infty)})$ as ground field, but the reasoning is analogous if we take any real computable subfield of $\overline{\CC(u^{(\infty)})}$.

In the case of $n=1$, all real parametrizations of $\Va(F)$ can be computed algorithmically (see Lemma~\ref{lem-fieldextesionproper} and~\cite[Chapter 7]{sendra2008rational}). 
By considering the following lemma,~\cite[Algorithm 1]{pavlov2022realizing} can be used for finding real rational realizations of a given IO-equation $F \in \QQ[u,y,y']$. 
The only additional considerations are that one has to compute in step (S8) the irreducible factors {\em over $\RR$} and check whether there exists in step (S8)a a {\em real} parametrization.

\begin{lemma}\label{lem:Algorithm1RealOutput}
Assume that $F \in \QQ[u,y^{(\infty)}]$ and the parametrization computed in step (S8)a of~\cite[Algorithm 1]{pavlov2022realizing} is real. 
Then the returned realization, provided by~\cite[Algorithm 1]{pavlov2022realizing}, is real as well.
\end{lemma}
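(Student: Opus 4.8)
The plan is to trace through Algorithm 1 of~\cite{pavlov2022realizing} and verify that, under the stated hypothesis, every step preserves reality of the relevant objects. The key structural observation is Theorem~\ref{thm:realrealization}: a realization is real precisely when the corresponding parametrization $P$ satisfies $P_0 \in \RR(u)(\bold{x})$ and the vector $\bold{z}$ obtained from~\eqref{eq-param} lies in $\RR(u,\bold{x})^n$. So the goal reduces to showing that the parametrization $P$ produced by the algorithm, and the subsequent application of~\eqref{eq-param}, do not leave the real field.

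First I would recall the overall shape of~\cite[Algorithm 1]{pavlov2022realizing}: it produces a parametrization $P=(P_0,\ldots,P_n)$ of $\Va(F)$ by first constructing, in step (S8)a, a parametrization of an auxiliary hypersurface (with components $P_0,\ldots,P_{n-1}$ independent of $u$, as in Proposition~\ref{prop:uorderzero}), and then recovering the last component $P_n$ via a differential/algebraic reconstruction. The hypothesis is exactly that the step (S8)a parametrization is real, i.e. $P_0,\ldots,P_{n-1} \in \RR(\bold{x})$. What remains is to argue that (i) the recovered $P_n$ is real, and (ii) solving~\eqref{eq-param} for $\bold{z}$ stays over $\RR(u,\bold{x})$.

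For (ii) the argument is immediate once (i) is in place: by Lemma~\ref{lem:realizibility}, $\bold{z}=\mathcal{J}(P_0,\ldots,P_{n-1})^{-1}\cdot(P_1-D_u(P_0),\ldots,P_n-D_u(P_{n-1}))^T$; since $F\in\QQ[u,y^{(\infty)}]$ has order zero in $u$, the operator $D_u$ reduces to $u'\partial_u$, but more to the point, the Jacobian $\mathcal{J}(P_0,\ldots,P_{n-1})$ has entries in $\RR(\bold{x})$ by the hypothesis, its inverse is therefore in $\RR(\bold{x})$ as well, and the vector being multiplied is real as soon as $P_n$ is. Thus $\bold{z}$ is obtained as a product of real objects and hence lies in $\RR(u,\bold{x})^n$, giving a real realization by Theorem~\ref{thm:realrealization}. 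So the whole statement hinges on (i).

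\textbf{The main obstacle} is precisely establishing (i): that the reconstruction of $P_n$ in the later part of the algorithm is defined over $\RR$ whenever $P_0,\ldots,P_{n-1}$ are. I would handle this by examining the reconstruction step concretely. Since $P_0,\ldots,P_{n-1}$ are real and algebraically independent (as noted after Lemma~\ref{lem:realizibility}, the Jacobian has full rank $n-1$, so the first $n-1$ components are algebraically independent over $\overline{\KK}$), the equation $F(P_0,\ldots,P_{n-1},y^{(n)})=0$ determines $y^{(n)}=P_n$ as an algebraic function of $\bold{x}$ over the real field generated by the $P_i$; because $F$ has real coefficients and the $P_i$ are real, any rational expression for $P_n$ solving this equation can be taken over $\RR$ — elimination theory does not enlarge the ground field, in the spirit of Lemma~\ref{lem-fieldextesionproper}. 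The delicate point to verify is that the particular $P_n$ selected by the algorithm (rather than some Galois-conjugate root) is the real one; this follows because $P_n=\LD_{\bold{p}}^n(q)$ is forced to be a single-valued rational function of $\bold{x}$ over $\RR(u)$ by the realizability condition, so the reconstruction cannot introduce a genuinely complex branch. Once $P_n\in\RR(u)(\bold{x})$ is secured, the realization $\bold{x}'=\bold{z},\,y=P_0$ is real, completing the proof.
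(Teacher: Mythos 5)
Your overall strategy (trace \cite[Algorithm 1]{pavlov2022realizing} and check that each step preserves reality) is the same as the paper's, and your treatment of the final step is fine: once the full parametrization $P$ is real, formula~\eqref{eq-param} involves only differentiation and inversion of a real Jacobian, so $\bold{z}\in\RR(u,\bold{x})^n$ and the realization is real. The gap is in your point (i). The parametrization computed in step (S8)a is not just $(P_0,\ldots,P_{n-1})$: it is an $(n+1)$-tuple $(P_0,\ldots,P_{n-1},Q_n)$ parametrizing the auxiliary hypersurface $\{q=0\}$ from step (S8), where $Q_n$ is a $\QQ$-linear combination of the $u$-coefficients of $P_n$ (see the proof of Theorem~\ref{thm-properrealization3}), and $P_n$ is then reconstructed from $Q_n$ via the rational univariate representation of step (S7). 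By weakening the hypothesis to ``$P_0,\ldots,P_{n-1}\in\RR(\bold{x})$'' you discard exactly the component that controls $P_n$, and your substitute argument does not close the hole: you claim $P_n$ must be real because realizability forces it to be ``a single-valued rational function over $\RR(u)$,'' but realizability only forces $P_n\in\overline{\KK}(u)(\bold{x})$ --- rationality over $\CC$ does not imply reality, and asserting single-valuedness over $\RR(u)$ rather than $\CC(u)$ is assuming the conclusion. Indeed $F(P_0,\ldots,P_{n-1},\,\cdot\,)=0$ can have several rational branches for $y^{(n)}$ over a real base, e.g.\ a complex-conjugate pair, and the algorithm's branch selection happens precisely through the choice of the factor $r$ of $q$ in step (S8) and of its parametrization in (S8)a.

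The paper's proof supplies the two ingredients you are missing: first, since $F$ is real, the rational univariate representation computed in step (S7) is real (no ground-field extension occurs there), so $P_n$ is a \emph{real} polynomial expression in $Q_n$; second, one restricts to the factors of $q$ irreducible over $\RR$ (using \cite[Lemma 7.5]{sendra2008rational}), so that the hypothesis ``the (S8)a parametrization is real'' --- including its last component --- immediately propagates to $P_n$ and hence, by your correct final step, to the realization.
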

\begin{proof}
In the notation of~\cite{pavlov2022realizing}:
For a real IO-equation $F$, every rational univariate representation in step (S7) is real. 
In step (S8), it suffices to consider the real factors $r$ of $q$ irreducible over $\RR$~\cite[Lemma 7.5]{sendra2008rational}. 
Then a real parametrization $\alpha$ computed in step (S8)a leads to a real output.
\end{proof}

In the case of $n=2$, there is no algorithm known for finding real parametrizations, but real proper parametrizations can be computed. 
Based on Theorem~\ref{thm:properrealization2}, we thus can decide the existence of an observable real realization of a given IO-equation $F \in \QQ[u,y,y',y'']$ by following~\cite[Algorithm 1]{pavlov2022realizing} restricted to real proper parametrizations in step (S8)a. Note that
\begin{enumerate}
    \item The computation of a rational univariate representation does not involve field extensions. Consequently, in step (S7), $q \in \RR(z_0,z_1)[T]$.
    \item Real proper parametrizations (of irreducible factors of $q$ over $\RR$) can be computed by~\cite{schicho1998rational}.
\end{enumerate}

Let us summarize this in the following theorem.

\begin{theorem}\label{thm-realrealization2}
Let $F \in \QQ[u,y,y',y'']$ be irreducible. 
Then an observable real realization of $F$ can be computed if it exists.
\end{theorem}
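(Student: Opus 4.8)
The plan is to establish the claim constructively by specializing the algorithm of~\cite{pavlov2022realizing} to the real setting, guided by Proposition~\ref{prop:properrealization2} and Theorem~\ref{thm:realrealization}. First I would recall that by Proposition~\ref{prop:properrealization2}, for an IO-equation $F \in \QQ[u,y,y',y'']$ the existence of any realization is equivalent to the existence of an observable one; moreover, an observable realization corresponds to a \emph{proper} parametrization of $\Va(F)$, which here is a surface since $F$ has order two. The key reduction is that deciding the existence of, and computing, an observable real realization amounts to deciding the existence of, and computing, a real proper parametrization $P \in \RR(u)(x_1,x_2)^2 \times \RR(u^{(\infty)})(x_1,x_2)$ satisfying the conditions of Lemma~\ref{lem:realizibility} (equivalently, Theorem~\ref{thm:realrealization} with $n=2$).

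Next I would run~\cite[Algorithm 1]{pavlov2022realizing} with the two modifications already indicated in the remarks preceding the statement. The first modification is that the rational univariate representation computed in step (S7) involves no field extension, so $q \in \RR(z_0,z_1)[T]$ remains real; this is the content of point (1) and mirrors the argument in Lemma~\ref{lem:Algorithm1RealOutput}. The second modification is that in step (S8)a one restricts to real proper parametrizations of the irreducible factors of $q$ over $\RR$, which can be computed by the algorithm of~\cite{schicho1998rational}; this is point (2) and is exactly where properness and reality are simultaneously secured. By Theorem~\ref{thm-properrealization3}, properness of the parametrization produced in step (S8)a propagates to properness of the output parametrization, hence yields an observable realization; by (the surface analogue of) Lemma~\ref{lem:Algorithm1RealOutput}, reality of that parametrization propagates to reality of the output realization.

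Putting these together: if an observable real realization exists, then $\Va(F)$ admits a real proper parametrization with the required field-of-definition structure, and by Lemma~\ref{lem-fieldextesionproper} any two proper parametrizations differ by a reparametrization with coefficients in the common field, so the real proper parametrization will be detected by~\cite{schicho1998rational} in step (S8)a; conversely, running the restricted algorithm and applying Theorems~\ref{thm-properrealization3} and~\ref{thm:realrealization} produces an observable real realization whenever one is output. Termination follows from termination of each step, exactly as in Theorem~\ref{thm-Alg1correct}.

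The main obstacle I anticipate is the surface case itself: unlike curves, a unirational real surface need not admit a \emph{real} rational (non-proper) parametrization even when it is rational over $\CC$, so one genuinely cannot decide existence of arbitrary real realizations of order-two IO-equations. The statement sidesteps this precisely by asking only for \emph{observable} real realizations, for which the decidability of real \emph{proper} parametrizations via~\cite{schicho1998rational} is available. I would therefore be careful to phrase the correctness argument so that it only ever invokes existence of real proper parametrizations, and to verify that the restriction in step (S8)a does not discard any observable real realization — this is guaranteed because every observable realization comes from a proper parametrization, and by Lemma~\ref{lem-fieldextesionproper} the relevant reparametrization relating it to~\cite{schicho1998rational}'s output is defined over $\RR$.
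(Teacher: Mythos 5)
Your proposal is correct and follows essentially the same route as the paper: the paper's justification for this theorem is precisely the preceding discussion, namely running \cite[Algorithm 1]{pavlov2022realizing} restricted to real proper parametrizations in step (S8)a, using that the rational univariate representation in (S7) stays over $\RR$, that real proper parametrizations of the resulting surfaces are computable by \cite{schicho1998rational}, and that Proposition~\ref{prop:properrealization2}, Theorem~\ref{thm-properrealization3} and (the real analogue of) Lemma~\ref{lem:Algorithm1RealOutput} propagate observability and reality to the output. Your additional remarks on completeness via Lemma~\ref{lem-fieldextesionproper} and on why only the \emph{observable} real case is decidable for surfaces are consistent with, and slightly more explicit than, the paper's own exposition.
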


For $n=2$, as commented above, real proper parametrizations of a surface can be found if they exist. 
If there does not exist a real proper parametrization, there might still exist improper real parametrizations. 
Thus, we might still find real realizations of $F \in \RR[u^{(\infty)},y,y',y'']$ even though $\Va(F)$ does not admit a proper parametrization over $\RR(u,u',u'')$, independent of the order of $F$ in $u$.

\begin{example}
Let us consider the differential polynomial $F \in \RR[u,y,y',y'']$ where
\[F = u^3 - 3u^2y'' + 3uy''^2 - y''^3 - 5u^2 + 10uy'' + y^2 + y'^2 - 5y''^2 + 4u - 4y''.\]
$F$ admits a real realization
{\small
\begin{align*}
&x_1' = \frac{(-x_1^3x_2^2 + x_1^2x_2^2 + ux_2^2 + x_1^3 - 6x_1^2x_2 + 2x_1x_2^2 + x_1^2 + u - 2x_1)(2x_1^2x_2 - 3x_1x_2^2 + 3x_1 - 4x_2)}{(3x_1^4 + 10x_1^2 + 4)(x_2^2 + 1)^2},\\
&x_2' = \frac{-12x_1^5x_2^2 + (-3x_2^4 + 30x_2^3 - 30x_2 + 3)x_1^4 + (-18x_2^4 - 12x_2^3 + 68x_2^2 - 12x_2 - 18)x_1^3}{2x_1(x_2^2 + 1)(3x_1^4 + 10x_1^2 + 4)} \\ & \qquad + \frac{((-3u + 2)x_2^4 - 36x_2^3 + 36x_2 + 3u - 2)x_1^2 - 12x_2(ux_2^2 + u + 4x_2/3)x_1 + 2ux_2^4 - 2u}{2x_1(x_2^2 + 1)(3x_1^4 + 10x_1^2 + 4)}, \\ 
&y = -\frac{((x_2^2 - 1)x_1^2 + 6x_1x_2 - 2x_2^2 + 2)x_1}{x_2^2 + 1}
\end{align*}
}
The corresponding parametrization $\mathbf{P}$ is improper. 
The surface defined by $F$ specialized at $u=1$,
$$F_s=-y''^3 + y^2 + y'^2 - 2y''^2 + 3y'',$$
has the parametrization $\mathbf{P}_s=\mathbf{P}|_{u=1}$. Since the projectivization of $F_s$ has two smooth real components, and the number of real components is a birational invariant, $\Va(F_s)$ can not be properly parametrized over the reals~\cite[Example 1]{schicho1998real}\footnote{Let us mention that in~\cite{schicho1998real} the defining equation of the cubic surface is supposed to be $F_s$.}. 
Since the specialization at $u=1$ is regular, the same holds for $F$.

Let us apply~\cite[Algorithm 1]{pavlov2022realizing} to $F$. 
In step (S8) we obtain the surface given by $$q=w^3 + 5w^2 - w^2 - z_1^2 + 4z_0.$$ 
A proper complex rational parametrization is given by
{\small
\[ \mathbf{\alpha} = \left( s, \frac{(-s^2 - 6s - 4)t^2 + 2\mathrm{i}(s + 2)^2t + s^2 + 6s + 4}{2t^2 + 2}, \frac{\mathrm{i}(s + 2)^2t^2 + (2s^2 + 12s + 8)t - \mathrm{i}(s + 2)^2}{2t^2 + 2} \right),\]}
\noindent and leads to an observable complex realization of $F$.
A real rational (improper) parametrization of $q$ can be found as well leading to the real realization above.
\end{example}

\subsection{First order IO-equations}\label{sec-real2}
For computing real realizations of $F \in \RR[u,u',y,y']$, we follow an alternative approach than the one in~\cite{pavlov2022realizing} that directly works with the parametrizations corresponding to the realizations. 
Let us note, however, that if both parametrizations used in~\cite[Algorithm 2]{pavlov2022realizing} are real, then the resulting realization is also real. 
Whether~\cite[Algorithm 2]{pavlov2022realizing} can be directly used to decide the existence of a real realization remains as an open problem.

\begin{theorem}\label{thm:realreparametrization}
Let $F \in \RR[u,u',y,y']$ be irreducible. 
Then there exists a real realization of $F$ if and only if $F$ admits a proper realization $x'=p(x,u), y=q(x,u)$ such that there exists a Lie-suitable reparametrization $s \in \CC(x)$ of the corresponding parametrization $\mathbf{P}=(q,\LD_p(q)) \in \CC(u,u')(x)^2$ with $\mathbf{P}(s) \in \RR(u,u')(x)^2$.
\end{theorem}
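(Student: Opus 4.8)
The plan is to read off this equivalence from three ingredients already established: the observable realizability of every first-order IO-equation (Theorem~\ref{thm:firstorderproper}), the fact that an observable realization generates all others by Lie-suitable reparametrizations (Proposition~\ref{prop:singletonrealizationbasis}, resting on Lemma~\ref{lem-fieldextesionproper} and Proposition~\ref{prop:reparametrizationHigherOrder}), and the real criterion of Theorem~\ref{thm:realrealization}. Recall that for $n=1$ a realization $x'=p,y=q$ has parametrization $P=(q,\LD_p(q))$ with $q\in\CC(u)(x)$ and $\LD_p(q)\in\CC(u,u')(x)$, and that here $\overline{\KK}=\overline{\RR}=\CC$.

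For the forward implication, assume $F$ has a real realization $x'=\tilde p,y=\tilde q$. Its corresponding parametrization $\tilde P=(\tilde q,\LD_{\tilde p}(\tilde q))$ then lies in $\RR(u,u')(x)^2$, but need not be proper. By Theorem~\ref{thm:firstorderproper}, $F$ is observably realizable, so I fix a proper realization $x'=p,y=q$ with proper parametrization $P=(q,\LD_p(q))\in\CC(u,u')(x)^2$. Proposition~\ref{prop:singletonrealizationbasis} then expresses the real realization as a reparametrization of this proper one; tracing its proof, the reparametrizing map has coefficients in $\CC(u,u')$, and Proposition~\ref{prop:reparametrizationHigherOrder} forces it to be Lie-suitable and independent of $u,u'$. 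Hence there is an $s\in\CC(x)$ with $P(s)=\tilde P\in\RR(u,u')(x)^2$, which is exactly the datum demanded by the statement.

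For the converse, assume $F$ admits a proper realization with parametrization $P=(q,\LD_p(q))\in\CC(u,u')(x)^2$ and a Lie-suitable $s\in\CC(x)$ with $Q:=P(s)\in\RR(u,u')(x)^2$. Since $P$ corresponds to a realization and $s\in\CC(x)\subseteq\overline{\RR(x)}$ is Lie-suitable, Remark~\ref{rem:nonrationalreparametrization} guarantees that $Q=P(s)$ again corresponds to a realization; in particular the vector $\bold z$ produced by~\eqref{eq-param} for $Q$ is rational and independent of the derivatives of $u$. Because $Q$ is real, $\bold z$ is a product of real matrices and hence real, while $Q_0=q(s)\in\RR(u)(x)$; applying Theorem~\ref{thm:realrealization} (or directly the computation in its proof) yields a real realization of $F$.

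The delicate point is the forward direction, where one must produce a proper realization in the first place and then recognise the given real realization as a reparametrization of it over the correct field: the former is precisely Theorem~\ref{thm:firstorderproper}, while the latter reduces the whole question to the existence of a single scalar $s\in\CC(x)$ carrying $P$ to a real parametrization. The one subtlety worth flagging is that this $s$ may be genuinely non-real even though $P(s)$ is real; this is exactly the scenario covered by Remark~\ref{rem:nonrationalreparametrization}, which is what guarantees that the reparametrization yields an honest real realization rather than merely a real parametrization.
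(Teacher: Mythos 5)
Your proposal is correct and follows essentially the same route as the paper: the forward direction via Theorem~\ref{thm:firstorderproper} and Proposition~\ref{prop:singletonrealizationbasis} (whose proof yields $s\in\CC(x)$ via Proposition~\ref{prop:reparametrizationHigherOrder}), and the converse by applying~\eqref{eq-param} to $Q=P(s)$ and observing that only real operations are involved, as in Theorem~\ref{thm:realrealization}. The only cosmetic difference is that you invoke Remark~\ref{rem:nonrationalreparametrization} where the paper cites Proposition~\ref{prop:reparametrizationHigherOrder} directly (which already suffices since $s\in\CC(x)=\overline{\RR}(x)$ is rational), but this changes nothing of substance.
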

\begin{proof}
Assume that there exists a realization $x'=f(x,u) \in \RR(u,x), y=g(x,u) \in \RR(u,x)$ of $F$ corresponding to a real parametrization $\mathbf{Q}(x)$. 
By Theorem~\ref{thm:firstorderproper}, there exists a proper (possibly complex) realization providing a proper parametrization $\mathbf{P}(x)$ of $\Va(F)$. Now, by Theorem~\ref{thm:singletonrealizationbasis}, $\mathbf{P}(s)=\mathbf{Q}(x)$ for some $s\in \CC(x) \setminus \CC$.

Conversely, let $s \in \CC(x) \setminus \CC$ with $\mathbf{Q}(x):=\mathbf{P}(s) \in \RR(u,u')(x)^2$. 
Apply~\eqref{eq-param} to $\mathbf{Q}(x)$. 
By Proposition~\ref{prop:reparametrizationHigherOrder}, and the fact that since there are only derivatives and inversion involved and none of them require field extensions, $z \in \RR(u,x)$ and the realization $x'=z, y=Q_1$ is real.
\end{proof}

For an algorithmic way of finding reparametrizations as in Theorem~\ref{thm:realreparametrization}, we follow the works~\cite{recio1997real,andradas2009simplification}.
For this purpose, let us introduce analytic functions and present their relation to the problem of finding real Lie-suitable reparametrizations.

\begin{definition}
A rational function $r(x,z)\in \CC(\mathbf{u})(x,z)$ is called \textit{analytic} if there exists $g(w)\in \CC(\mathbf{u})(w)$ such that 
\[ g(x+\mathrm{i}\,z)=r(x,z) .\]
In the affirmative case, $g$ is called the \textit{generator} of $r$.
\end{definition}

\begin{remark}
Every $r(\mathbf{x})\in \CC(\mathbf{u})(\mathbf{x})$, analytic or not, can be written as $g=U(\mathbf{x})+\mathrm{i}\,V(\mathbf{x})$ with $U,V\in \RR(\mathbf{x})$. We call $U,V$ the \textit{(real and imaginary) components} of $r$.

For a rational function $r(\mathbf{x})\in \CC(\mathbf{u})(\mathbf{x})$ with components $U,V$, we write $\overline{r}(\mathbf{x})$ for the conjugation $\overline{r}(\mathbf{x}):=U(\mathbf{x})-\mathrm{i}\,V(\mathbf{x})$.
\end{remark}

The next lemmas generalize~\cite[Lemma 2.1., 2.2.]{recio1997real}.
Given $p(x,z)\in \CC(\mathbf{u})[x,z]$ and $\mathbf{u}^0 \in \Omega$ in some set $\Omega$, we will use the notation $p(\mathbf{u}^0;x,z)$ for evaluating the variables $\mathbf{u}$ at $\mathbf{u}^0$ whenever it is well-defined.

\begin{lemma}\label{lem-lemma21InRS97}
Let $p(x,z)\in \CC(\mathbf{u})[x,z]\setminus \{0\}$ be an analytic polynomial and $U,V\in \RR[x,y]$ its real and imaginary part. Then, $\gcd(U,V)=1$.
\end{lemma}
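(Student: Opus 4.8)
The plan is to exploit the factorization $p=g(x+\mathrm{i}z)$ together with the conjugate relation, reducing the coprimality of $U$ and $V$ to the coprimality of a polynomial in one variable with a polynomial in a second, independent variable.

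First I would observe that, since $p$ is a polynomial in $x,z$ lying in the subring $\CC(\bold{u})[x+\mathrm{i}z]\cong\CC(\bold{u})[w]$, its generator $g$ must in fact be a polynomial $g(w)=\sum_k a_k w^k$ with $a_k\in\CC(\bold{u})$; a rational generator in lowest terms with non-constant denominator could not yield a polynomial after the injective substitution $w\mapsto x+\mathrm{i}z$. Writing $a_k=\alpha_k+\mathrm{i}\beta_k$ and letting $\tilde g(w):=\sum_k\overline{a_k}\,w^k$ denote the coefficientwise conjugate, I would record the two identities
\[ U+\mathrm{i}\,V = p = g(x+\mathrm{i}z), \qquad U-\mathrm{i}\,V = \overline{p} = \tilde g(x-\mathrm{i}z). \]

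Next I would perform the invertible $\CC(\bold{u})$-linear change of coordinates $s:=x+\mathrm{i}z$, $t:=x-\mathrm{i}z$, under which the identities become $U+\mathrm{i}V=g(s)$, a polynomial in $s$ alone, and $U-\mathrm{i}V=\tilde g(t)$, a polynomial in $t$ alone. In the UFD $\CC(\bold{u})[s,t]$ these two are coprime: passing to the algebraic closure, $g(s)$ splits into linear factors $s-\rho_j$ and $\tilde g(t)$ into linear factors $t-\sigma_k$, and no prime dividing the former divides the latter. Since the gcd of two polynomials is unchanged under field extension, $\gcd(g(s),\tilde g(t))=1$ over $\CC(\bold{u})[s,t]$, and transforming back yields $\gcd(U+\mathrm{i}V,\,U-\mathrm{i}V)=1$ in $\CC(\bold{u})[x,z]$.

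Finally I would conclude by a descent argument: any common factor $h\in\RR(\bold{u})[x,z]$ of $U$ and $V$ divides both $U+\mathrm{i}V$ and $U-\mathrm{i}V$ in $\CC(\bold{u})[x,z]$, hence divides their gcd, which is a unit; therefore $h$ is constant and $\gcd(U,V)=1$. The delicate points, which I expect to cost more care to state cleanly than to prove, are verifying that the generator is genuinely a polynomial and that coprimality is insensitive to the extension from $\CC(\bold{u})$ to its algebraic closure; the degenerate case where $g$ is constant (so possibly $V=0$) is covered since then $U$ is a nonzero constant and $\gcd(U,V)$ is a unit.
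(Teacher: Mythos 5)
Your proof is correct, but it takes a genuinely different route from the paper's. The paper reduces the statement to the non-parametric case of \cite[Lemma 2.1]{recio1997real} by specializing $\bold{u}$ at a generic real point $\bold{u}^0$: it sets $G=\gcd(U,V)$, introduces the cofactors, and constructs Zariski-open sets $\Omega_1\supset\Omega_2$ on which the degrees of $g$, $U$, $V$ and the cofactors are preserved under evaluation, so that $G(\bold{u}^0;x,z)$ divides $\gcd(U(\bold{u}^0;x,z),V(\bold{u}^0;x,z))=1$ and hence $\deg_{\{x,z\}}G=0$. You instead argue directly over $\CC(\bold{u})$: after the invertible change of coordinates $s=x+\mathrm{i}z$, $t=x-\mathrm{i}z$, the combinations $U+\mathrm{i}V=g(s)$ and $U-\mathrm{i}V=\tilde g(t)$ are univariate in independent variables, hence coprime (their irreducible factors over $\overline{\CC(\bold{u})}$ are non-associate linear forms, and coprimality is stable under field extension), and any common factor of $U$ and $V$ divides both, forcing $\gcd(U,V)=1$. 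Your argument is self-contained, avoids the citation and the somewhat delicate bookkeeping of degree-preserving open sets, and works verbatim over any formally real ground field with $\mathrm{i}$ adjoined; you are also more careful than the paper in justifying that the generator $g$ of a polynomial analytic function is itself a polynomial (the paper simply asserts ``the polynomial generator''), though your appeal to injectivity of the substitution should really be phrased as preservation of coprimality of numerator and denominator under the ring embedding $w\mapsto x+\mathrm{i}z$ (via a B\'ezout identity). The paper's approach, in turn, buys reuse of the known one-parameter-free result and exhibits a specialization template that the authors employ elsewhere; the two proofs are complementary but yours is arguably the cleaner one for this particular lemma.
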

\begin{proof}
If $p$ is constant, the result is trivial. 
Let $p(x,z)$ be non-constant and let $g(w)\in \CC(\mathbf{u})[w]$ be the polynomial generator. Let $G:=\gcd(U,V)$ and let $U^*$ and $V^*$ be the corresponding cofactors, i.e. $U=U^*G$ and $V=V^*G$. We consider the following non-empty Zariski open subsets of $\RR^{\#(\mathbf{u})}$. 
Let $\Omega_1\subset \RR^{\#(\mathbf{u})}$ be such that, for $\mathbf{u}^0\in \Omega_1$, the polynomials $p(\mathbf{u}^0;x,z), U(\mathbf{u}^0;x,z), V(\mathbf{u}^0;x,z)$ are well-defined and 
\begin{enumerate}
\item $\deg_w(g(w))=\deg_w(g(\mathbf{u}^0;w)),$
\item 
$\deg_{\{x,z\}}(U(x,z))=\deg_{\{x,z\}}(U(\mathbf{u}^0;x,z)),$
\item $\deg_{\{x,z\}}(V(x,z))=\deg_{\{x,z\}}(V(\mathbf{u}^0;x,z))$. 
\end{enumerate}
$\Omega_1$ can be constructed by taking the lcm of all denominators of all non-zero coefficients of $g$ w.r.t. $w$ and of $U,V$ w.r.t. $\{x,z\}$. In addition, one has also to require that the leading coefficient of $g$ w.r.t. $w$ does not vanish and that at least one non-zero coefficient of each of the corresponding leading terms of $U,V$, seen as polynomials in $\RR[x,z]$, does not vanish.

We observe now that for $\mathbf{u}^0\in \Omega_1$ it holds that $p(\mathbf{u}^0;x,z)$ is analytic, generated by $g(\mathbf{u}^0;w)$, with real and imaginary part $U(\mathbf{u}^0;x,z)$ and $V(\mathbf{u}^0;x,z)$, respectively. Moreover, by~\cite[Lemma 2.1]{recio1997real}, we get that $\gcd(U(\mathbf{u}^0;x,z),V(\mathbf{u}^0;x,z))=1$.

Let $\Omega_2\subset \Omega_1$ be such that for $\mathbf{u}^0\in \Omega_2$ it holds that $U^*(\mathbf{u}^0;x,z),V^*(\mathbf{u}^0;x,z),G(\mathbf{u}^0;x,z),$ are well-defined, and 
\begin{itemize}
\item[4.] 
$\deg_{\{x,z\}}(U^*(x,z))=\deg_{\{x,z\}}(U^*(\mathbf{u}^0;x,z)),$
\item[5.] $\deg_{\{x,z\}}(V^*(x,z))=\deg_{\{x,z\}}(V^*(\mathbf{u}^0;x,z))$. 
\end{itemize}
$\Omega_2$ can be constructed following similar comments as in the construction of $\Omega_1$. Clearly $\Omega_2\neq \emptyset$
By (2),(3),(4),(5), we get that $\deg_{\{x,z\}}(G(x,z))=\deg_{\{x,z\}}(G(\mathbf{u}^0;x,z))$. Furthermore, $G(\mathbf{u}^0;x,z)$ divides $U(\mathbf{u}^0;x,z)$ and $V(\mathbf{u}^0;x,z)$. So, $G(\mathbf{u}^0;x,z)$ divides $\gcd(U(\mathbf{u}^0;x,z),V(\mathbf{u}^0;x,z))=1$, and hence $\deg_{\{x,z\}}(G(x,z))=\deg_{\{x,z\}}(G(\mathbf{u}^0;x,z))=0$. Thus, $\gcd(U,V)=1$.
\end{proof}

Let $\mathbf{P}=\left(\tfrac{f_1}{g_1},\tfrac{f_2}{g_2}\right) \in \CC(\mathbf{u})(x)^2$ be a parametrization such that $\gcd(f_1,g_1)=\gcd(f_2,g_2)=1$ where the $\gcd$ is taken over $\CC(\mathbf{u})[x]$. 
We consider the formal substitution
$\mathbf{P}^*(x,z):=\mathbf{P}(x+\mathrm{i}\,z)$ that we express as
\begin{equation}\label{eq-P}
\mathbf{P}^*(x,z)=\left( \dfrac{U_1(x,z)+\mathrm{i}\,V_1(x,z)}{W_1(x,z)^2}, \dfrac{U_2(x,z)+\mathrm{i}\,V_2(x,z)}{W_2(x,z)^2}\right)\in \CC(\mathbf{u})(x,z)^2
\end{equation}
where $U_i,V_i$ are the real and imaginary parts of $f_i(x+\mathrm{i}\,y) \overline{g_i}(x-\mathrm{i}\,y)$, $A_i,B_i$ are the real and imaginary parts of $g_i(x+\mathrm{i}\,y)$, and $W_i:= A_{i}^{2}+B_{i}^{2}$.

\begin{remark}\label{rem-conformal}
Let $s\in \CC(x)$ have degree at least one in $x$, and let $s_1,s_2 \in \RR(x)$ be the real and imaginary parts of $s$. Clearly $(s_1,s_2)\not\in \RR^2$ since otherwise $s\in \CC$. So $(s_1,s_2)$ is a parametrization of a real curve. If $\deg_x(s)=1$, i.e. $s$ is a M\"obius transformation, the curve parametrized by $(s_1,s_2)$ is either a real line or a real circle since it is the image of a real line (namely $\RR$) under a conformal map.
\end{remark}

\begin{theorem}\label{theorem-characterization}
Let $\mathbf{P} \in \CC(\mathbf{u})(x)^2$ be a proper parametrization of $F \in \RR(\mathbf{u})[y_0,y_1]$ and let $\mathbf{P}^*$ be as in~\eqref{eq-P}. 
Then the following statements are equivalent.
\begin{enumerate}
\item There exists a reparametrization given by $s \in \CC(x)$ such that $\mathbf{P}(s) \in \RR(\mathbf{u})(x)^2$.
\item $V:=\gcd(V_1,V_2)$ has a factor in $\RR[x,z]$ that defines a real rational curve.
\end{enumerate}
In the affirmative case, if $(s_1,s_2) \in \RR(x)^2$ is a parametrization of the real rational curve stated in~(2), then $s=s_1+\mathrm{i}\,s_2$ fulfills~(1). 
Moreover, it holds that $\deg_x(s)=1$ if and only if $V$ defines a real line or a real circle.
\end{theorem}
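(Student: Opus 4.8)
The plan is to reduce condition (1), that $P(s)$ be real, to the vanishing of the imaginary parts recorded by $V_1,V_2$, and then to read off that vanishing from the common factors of $V_1$ and $V_2$. First I would write $s=s_1+\mathrm{i}\,s_2$ with $s_1,s_2\in\RR(x)$ its real and imaginary parts, and use the formal identity $P^*(x,z)=P(x+\mathrm{i}\,z)$ from~\eqref{eq-P} to obtain $P(s)=P^*(s_1,s_2)$, i.e. the substitution $x\mapsto s_1,\ z\mapsto s_2$ in $P^*$. Since the $i$-th component of $P^*$ has the shape $(U_i+\mathrm{i}\,V_i)/W_i^2$ with $U_i,V_i,W_i\in\RR[x,z]$, its specialization has real denominator $W_i(s_1,s_2)^2$ (nonzero for $s$ not identically a pole) and numerator $U_i(s_1,s_2)+\mathrm{i}\,V_i(s_1,s_2)$. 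Hence $P(s)\in\RR(\bold{u})(x)^2$ if and only if $V_1(s_1,s_2)=V_2(s_1,s_2)=0$. This single observation, which is exactly where the form of $P^*$ is used, drives everything that follows.

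For the implication (1)$\Rightarrow$(2) I would argue geometrically. A reparametrization $s$ is non-constant, so by Remark~\ref{rem-conformal} the pair $(s_1,s_2)\notin\RR^2$ parametrizes a genuine real curve $\Cu_s$. The relations $V_1(s_1,s_2)=V_2(s_1,s_2)=0$ say precisely that $\Cu_s$ is contained in $\Va(V_1)\cap\Va(V_2)$. Because $(s_1,s_2)$ is defined over $\RR$, the irreducible polynomial $h\in\RR[x,z]$ cutting out $\Cu_s$ is real, and since $\Cu_s\subseteq\Va(V_i)$ it divides $V_i$ for $i=1,2$; therefore $h$ divides $\gcd(V_1,V_2)=V$. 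As $\Cu_s$ is both real and rational — it carries the real rational parametrization $(s_1,s_2)$ — the factor $h$ is exactly the object demanded in (2).

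For (2)$\Rightarrow$(1), which simultaneously settles the first affirmative-case claim, I would take a factor $h\mid V$ in $\RR[x,z]$ defining a real rational curve and choose a real parametrization $(s_1,s_2)\in\RR(x)^2$ of it. Since $h\mid V$ and $V\mid V_i$, we get $V_i(s_1,s_2)=0$ for $i=1,2$, so by the opening observation $s:=s_1+\mathrm{i}\,s_2$ yields $P(s)\in\RR(\bold{u})(x)^2$, and $s$ is non-constant because $h$ defines a curve, so it is a genuine reparametrization. Here I would emphasize that both hypotheses on the factor are essential: \emph{realness} of $h$ is what produces $s_1,s_2\in\RR(x)$, while \emph{rationality} is what makes them rational functions at all.

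For the degree addendum I would appeal to Remark~\ref{rem-conformal}, taking $(s_1,s_2)$ proper so that $\deg_x(s)$ is independent of the choice (two proper parametrizations differ by a real M\"obius reparametrization, which leaves $\deg_x(s)$ unchanged). If $\deg_x(s)=1$ then $s$ is a M\"obius transformation and Remark~\ref{rem-conformal} forces $\Cu_s$ to be a real line or a real circle; conversely a real line or real circle is the image of $\RR$ under a conformal (M\"obius) map, which supplies a proper parametrization whose associated $s$ has degree one. I expect this last equivalence to be the main obstacle: one must make precise that the relevant invariant is the degree of the \emph{complex} function $s$ rather than of the pair $(s_1,s_2)$, and verify in general — not merely on examples — the cancellation by which two degree-two real components recombine into a degree-one $s$ for a circle, using the conformal characterization of lines and circles as the only images of $\RR$ under degree-one maps.
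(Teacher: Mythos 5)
Your overall route coincides with the paper's: reduce realness of $P(s)$ to $V_1(s_1,s_2)=V_2(s_1,s_2)=0$, extract a real factor of $\gcd(V_1,V_2)$ from the curve traced by $(s_1,s_2)$ in one direction, substitute a real rational parametrization of such a factor in the other, and handle the degree addendum via Remark~\ref{rem-conformal}. There is, however, one concrete gap at the very step you call the driver of the argument. The equivalence ``$P(s)\in\RR(\bold{u})(x)^2$ if and only if $V_1(s_1,s_2)=V_2(s_1,s_2)=0$'' is only valid once you know that $W_i(s_1,s_2)\neq 0$, i.e.\ that the curve parametrized by $(s_1,s_2)$ does not lie in the common zero set of $A_i$ and $B_i$ (note $W_i(s_1,s_2)$ is a sum of two squares in the formally real field $\RR(\bold{u})(x)$, so it vanishes exactly when both $A_i(s_1,s_2)$ and $B_i(s_1,s_2)$ do). Your parenthetical ``nonzero for $s$ not identically a pole'' restates the requirement rather than proving it. This is precisely what the paper's Lemma~\ref{lem-lemma21InRS97} is for: $A_i+\mathrm{i}\,B_i=g_i(x+\mathrm{i}\,z)$ is an analytic polynomial, hence $\gcd(A_i,B_i)=1$ and no curve is contained in $\Va(A_i)\cap\Va(B_i)$. (Alternatively, one can observe that $W_i(s_1,s_2)=g_i(s)\,\overline{g_i}(\overline{s})$ and that $s$ and $\overline{s}$ are non-constant, hence transcendental over $\CC(\bold{u})$, so neither factor vanishes.) You need this non-vanishing in both directions of your equivalence, so it cannot be skipped.

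The remaining steps are sound but should be tightened to match the paper. In (2)$\Rightarrow$(1) the paper also records that $P(s)$ is genuinely a parametrization, since $s=s_1+\mathrm{i}\,s_2$ has positive degree and so not both components of $P(s)$ can be constant. For the addendum, the paper cites~\cite[Theorem 3.2]{recio1997real} for the fact that a proper real parametrization $(s_1,s_2)$ of a real line or circle yields a degree-one $s=s_1+\mathrm{i}\,s_2$; the cancellation you flag as ``the main obstacle'' is exactly what that citation supplies. Your conformal sketch (a real line or circle is the image of $\RR$ under a M\"obius map whose real and imaginary parts give a proper real parametrization, and any two proper real parametrizations differ by a real M\"obius change of variable) is a legitimate self-contained substitute, but as written it is a plan rather than a proof.
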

\begin{proof} 
Assume that $\mathbf{P}$ fulfills (1) for some reparametrization $s\in \CC(x)$. Let $s$ be expressed as $s_1+\mathrm{i}\,s_2$ where $s_1,s_2$ are the real and imaginary parts of $s$, respectively. By Remark~\ref{rem-conformal}, $(s_1,s_2) \in \RR(x)^2 \setminus \RR^2$. Let $W_i(s_1,s_2)$ with $W_i=(A_i+\mathrm{i}\,B_i)(A_i- \mathrm{i}\,B_i)$ be as in~\eqref{eq-P}. If $W_i(s_1,s_2)=0$, since $s_1,s_2$ are real, then $A_i(s_1,s_2)=B_i(s_1,s_2)=0$ and this implies that $\gcd(A_i,B_i)\neq 1$ which is a contradiction to Lemma~\ref{lem-lemma21InRS97}.
Therefore, $W_i(s_1,s_2)$ is non-zero and $\mathbf{P}^*(s_1,s_2)=\mathbf{P}(s)\in \RR(\mathbf{u})(x)^2$ is well-defined. Thus, since $s_1,s_2\in \RR(x)$ we have that $V_1(s_1,s_2)=V_2(s_1,s_2)=0$. So, $(s_1,s_2)$ parametrizes the curve defined by one factor $V^*$ of $V$. Since $(s_1,s_2)\in \RR(x)^2$ we have that $V^*\in \RR[x,z]$ defines a real rational curve. By Remark~\ref{rem-conformal}, if $s$ is a M\"obius-transformation, $\Va(V^*)$ is either a real line or a real circle.

For the converse direction, let $(a,b) \in \RR(x)^2$ be a proper parametrization of a factor $V^*$ of $V$ defining a real rational curve. Reasoning as above, we have that $W_i(a,b)\neq 0$. Therefore, $\mathbf{P}(a+\mathrm{i}\,b)\in \RR(\mathbf{u})(x)^2$. We observe that $a+\mathrm{i}\,b$ is a rational function of positive degree and hence, not both components of $\mathbf{P}(a+\mathrm{i}\,b)$ can be constant. Thus, $\mathbf{P}(a+\mathrm{i}\,b)$ is indeed a parametrization. Moreover, in the proof of~\cite[Theorem 3.2.]{recio1997real} it is shown that if $V^*$ defines a real line or a real circle, then $s=a+\mathrm{i}\,b$ is a M\"obius transformation.
\end{proof}

\begin{corollary}\label{thm:realproperrealization}
Let $F \in \RR[u,u',y,y']$ be irreducible with an observable realization $x'=p(u,x), y=q(u,x)$. 
Then there exists a real realization of $F$ if and only if $V$, as in Theorem~\ref{theorem-characterization}, has a factor $V^* \in \RR[x,z]$ defining a real rational curve. 
Moreover, there exists an observable real realization if and only if $V^*$ defines a real line or a real circle.
\end{corollary}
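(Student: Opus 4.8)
The plan is to obtain the statement by gluing together Theorem~\ref{thm:realreparametrization} and the characterization in Theorem~\ref{theorem-characterization}, applied directly to the concrete proper parametrization $P=(q,\LD_p(q))\in\CC(u,u')(x)^2$ of the given observable realization. Since the realization is observable, $P$ is proper, so Theorem~\ref{theorem-characterization} is available with this very $P$, and its associated quantities $V_1,V_2,V=\gcd(V_1,V_2)$ are exactly those referenced in the corollary.

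For the first equivalence I would argue as follows. If $F$ has a real realization, then by Theorem~\ref{thm:realreparametrization} there is \emph{some} proper realization with parametrization $\hat{P}$ and a Lie-suitable reparametrization $\hat{s}\in\CC(x)$ with $\hat{P}(\hat{s})\in\RR(u,u')(x)^2$. The point is to transport this to our fixed $P$: by Proposition~\ref{prop:singletonrealizationbasis} the realization giving $\hat{P}$ is a reparametrization of the observable one, and by Proposition~\ref{prop:reparametrizationHigherOrder} the connecting map $\tau$ is Lie-suitable, hence $\tau\in\overline{\KK}(x)\subseteq\CC(x)$ and $\hat{P}=P(\tau)$. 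Then $s:=\tau(\hat{s})\in\CC(x)$ satisfies $P(s)=\hat{P}(\hat{s})\in\RR(u,u')(x)^2$, so $P$ fulfils condition~(1) of Theorem~\ref{theorem-characterization}, whence $V$ has a factor $V^*\in\RR[x,z]$ defining a real rational curve. Conversely, if such a $V^*$ exists, Theorem~\ref{theorem-characterization} produces a reparametrization $s\in\CC(x)$ with $P(s)\in\RR(u,u')(x)^2$, and Theorem~\ref{thm:realreparametrization}, taking the proper realization to be the given one, yields a real realization of $F$.

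For the second equivalence I would use that a real realization with parametrization $P(s)$ is observable precisely when $P(s)$ is proper, and that, since $P$ is proper, $P(s)$ is proper if and only if $\deg_x(s)=1$, i.e. $s$ is a M\"obius transformation (as recalled in the preliminaries for $n=1$). Hence an observable real realization exists if and only if there is a degree-one reparametrization $s\in\CC(x)$ with $P(s)\in\RR(u,u')(x)^2$. By the final assertion of Theorem~\ref{theorem-characterization} — writing $s=s_1+\mathrm{i}\,s_2$ with $(s_1,s_2)$ a parametrization of the factor $V^*$ — such an $s$ has $\deg_x(s)=1$ exactly when $V^*$ defines a real line or a real circle, and Remark~\ref{rem-conformal} gives the converse that a line or circle forces $s$ to be M\"obius. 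Combining these two directions yields the second equivalence.

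I expect the main obstacle to be the transfer step in the first equivalence: Theorem~\ref{thm:realreparametrization} only guarantees a suitable reparametrization for \emph{some} proper realization, whereas the corollary must read off the condition from the \emph{given} observable parametrization $P$. Making this rigorous hinges on the fact that the map linking two proper realizations is Lie-suitable and thus lies in $\overline{\KK}(x)\subseteq\CC(x)$ (Propositions~\ref{prop:singletonrealizationbasis} and~\ref{prop:reparametrizationHigherOrder}), so that composing it with $\hat{s}$ does not leave $\CC(x)$; without the independence of $\tau$ from $u,u'$ the composed reparametrization could acquire denominators in $u,u'$ and violate the hypotheses of Theorem~\ref{theorem-characterization}. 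Everything else reduces to citing the two theorems.
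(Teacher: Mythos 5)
Your proposal is correct and follows essentially the same route as the paper: reduce to the fixed proper parametrization $P=(q,\LD_p(q))$ via Theorem~\ref{thm:realreparametrization} (together with Propositions~\ref{prop:singletonrealizationbasis} and~\ref{prop:reparametrizationHigherOrder} for the transfer to the given observable realization), then invoke Theorem~\ref{theorem-characterization}, with the ``Moreover'' clause handling the observable case via M\"obius reparametrizations. The paper's proof is terser and leaves the transfer step implicit, but the underlying argument is the same.
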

\begin{proof}
Theorem~\ref{thm:firstorderproper} implies that $F$ is observably realizable as stated. Let $\mathbf{P}=(q,\LD_p(q)) \in \CC(u,u')(x)^2$ be the corresponding parametrization. By Theorem~\ref{thm:realreparametrization}, there exists a real realization if and only if it can be obtained by a Lie-suitable reparametrization of $\mathbf{P}$. Then, the statement follows by Theorem~\ref{theorem-characterization}.
\end{proof}

Note that if an observable real realization exists, we can find every other real realization by a real Lie-suitable reparametrization.

\begin{algorithm}[H]
\caption{RealRealization}
\label{alg-realrealization}
\begin{algorithmic}[1]
    \REQUIRE An irreducible polynomials $F \in \QQ[u,u',y,y']$.
    \ENSURE A real realization of $F$ if it exists.
    \STATE Decide whether $F$ is realizable and, in the affirmative case, compute an observable realization $x'=p(x,u), y=q(x,u)$ of $F$ by Algorithm~\ref{alg-properrealization}.
    \STATE Compute $V=\gcd(V_1,V_2)$ as in Theorem~\ref{theorem-characterization} from the corresponding parametrization $\mathbf{P}=(q,\LD_{p}(q))$.
    \STATE If a factor of $V$ defines a real rational curve $\mathcal{C}$, then compute a real proper parametrization $(s_1,s_2) \in \RR(x)^2$ of $\mathcal{C}$ and set $s_i=s_1+\mathrm{i}\,s_2$; otherwise stop.
    \STATE Output the real realization $x'=p(u,s_i)/ \partial_xs_i, \, y=q(u,s_i)$.
\end{algorithmic}
\end{algorithm}

\begin{theorem}\label{thm-allrealrealizatons}
Algorithm~\ref{alg-realrealization} is correct.
\end{theorem}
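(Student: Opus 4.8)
The plan is to establish partial correctness and termination separately, chaining together the results already proved for the sub-procedures and for the underlying curve geometry. For partial correctness I would first verify that whenever the algorithm reaches Step 4 and outputs $x'=p(u,s_i)/\partial_x s_i$, $y=q(u,s_i)$, this is genuinely a real realization of $F$. By Theorem~\ref{thm-Alg1correct}, Step 1 produces an observable realization $x'=p(u,x)$, $y=q(u,x)$ whose corresponding parametrization $P=(q,\LD_p(q))\in\CC(u,u')(x)^2$ is proper, and Step 3 produces a real proper parametrization $(s_1,s_2)\in\RR(x)^2$ of a factor $V^*$ of $V=\gcd(V_1,V_2)$ defining a real rational curve. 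Setting $s_i=s_1+\mathrm{i}\,s_2\in\CC(x)\setminus\CC$, Theorem~\ref{theorem-characterization} guarantees $P(s_i)\in\RR(u,u')(x)^2$. Since $s_i$ is independent of $u$ and has positive degree (so $\partial_x s_i\neq0$), it is a Lie-suitable reparametrization, and the converse direction of Theorem~\ref{thm:realreparametrization}, via the reparametrization formula~\eqref{eq-rerealization} of Proposition~\ref{prop:reparametrizationHigherOrder} specialized to $n=1$, yields exactly the real realization displayed in Step 4.

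Second, I would show that the algorithm stops without output (in Step 1 or Step 3) precisely when no real realization exists. If $F$ is not realizable, then a fortiori it admits no real realization, so stopping in Step 1 is correct. If $F$ is realizable, then by Theorem~\ref{thm:firstorderproper} it is observably realizable, so Step 1 succeeds and produces the observable realization used to form $P$ and $V$. Corollary~\ref{thm:realproperrealization} then asserts that a real realization exists if and only if $V$ has a factor $V^*\in\RR[x,z]$ defining a real rational curve, which is exactly the condition tested in Step 3. Hence the algorithm stops in Step 3 precisely when no real realization exists, and otherwise it proceeds to output one by the previous paragraph.

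For termination I would observe that each step terminates: Step 1 by Theorem~\ref{thm-Alg1correct}; Step 2 is the formal substitution~\eqref{eq-P} followed by a polynomial $\gcd$; Step 3 is effective because deciding whether a plane curve is a real rational curve and, in that case, computing a real proper parametrization can be carried out algorithmically (see~\cite[Chapter 7]{sendra2008rational}); and Step 4 is a substitution together with one derivative.

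The part I expect to require the most care is the interface in Step 3 between the factor $V^*$ and the reparametrization: one must ensure that the real proper parametrization $(s_1,s_2)$ of $V^*$ indeed gives $s_i=s_1+\mathrm{i}\,s_2$ of positive degree with $P(s_i)$ well-defined and real, which is exactly the content of Theorem~\ref{theorem-characterization} (including the non-vanishing $W_i(s_1,s_2)\neq0$, a consequence of $\gcd(A_i,B_i)=1$ in Lemma~\ref{lem-lemma21InRS97}). A secondary subtlety is that the criterion in Corollary~\ref{thm:realproperrealization} is attached to the specific observable realization computed in Step 1; since for $n=1$ all observable realizations are related by M\"obius reparametrizations, the existence of a real realization, and therefore the validity of the test, is independent of that choice.
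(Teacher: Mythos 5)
Your proposal is correct and follows essentially the same route as the paper: the paper's own proof is just the two-sentence observation that correctness follows from Corollary~\ref{thm:realproperrealization} together with Theorem~\ref{thm-Alg1correct}, and termination from the termination of Algorithm~\ref{alg-properrealization} and of the real-rationality test on factors of $V$ via~\cite[Section 7]{sendra2008rational}. Your write-up simply unfolds these citations in more detail (tracing the output formula through Theorem~\ref{theorem-characterization}, Theorem~\ref{thm:realreparametrization} and Proposition~\ref{prop:reparametrizationHigherOrder}), which is consistent with, and somewhat more explicit than, the paper's argument.
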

\begin{proof}
Corollary~\ref{thm:realproperrealization} together with Theorem~\ref{thm-Alg1correct} imply correctness. 
Termination follows from the termination of Algorithm~\ref{alg-properrealization} and the termination of deciding whether a factor of $V$ is real rational~\cite[Section 7]{sendra2008rational}.
\end{proof}

\begin{example}
Let us consider the first-order IO-equation
\begin{align*}
F = \, & 9(u - 1)^2y^4 + (-12u^2-24u+36)y^3+(22u^2+128y'^2-12u + 54)y^2 + (-12u^2-24u + 36)y \\ & + 9u^2 + 128y'^2 - 18u + 9 \, = 0.
\end{align*}
A (complex) realization can be found by applying~\cite[Algorithm 2]{pavlov2022realizing} with
\[ x'=\frac{\mathrm{i}\,(x^2-2x-1)(ux^4 - 6x^2 + u)}{8(x^2 + 1)^2}, \, y=\frac{-x^2-2x+1}{x^2-2x-1}. \]
The corresponding parametrization is
\[ \mathbf{P} = \left(\frac{-x^2 - 2x + 1}{x^2 - 2x - 1}, \frac{\mathrm{i}\,(ux^4 - 6x^2 + u)}{2x^4 - 4x^3 - 4x - 2}\right).\]
Evaluating the defining polynomial $F$ at $u=1$, we obtain the irreducible polynomial $$F_s = 2y^2y'^2 + y^2 + 2y'^2.$$
The corresponding curve $\Va(F_s)$ can be rationally parametrized by the evaluation of $\mathbf{P}$ at $u=1$, but there exists no real parametrization~\cite[page 252]{recio1997real} and thus no real realization of $F_s$. 
Alternatively, by following Algorithm~\ref{alg-realrealization}, we obtain $V=x^2+z^2+1$ which defines a non-real curve.
\end{example}

\subsubsection*{Acknowledgements}
The authors would like to thank Gleb Pogudin and Josef Schicho for useful discussions. 
First and third author partially supported by the grant PID2020-113192GB-I00/AEI/10.13039/501100011033 (Mathematical Visualization: Foundations, Algorithms and Applications) from the Spanish State Research Agency (Ministerio de Ciencia, Innovación y Universidades). 

\appendix

\bibliographystyle{plain}

\end{document}